\numberwithin{equation}{section}
\numberwithin{equation}{section}
\renewcommand{\det}{\mathop\mathrm{det}\nolimits}
\renewcommand{\epsilon}{\varepsilon}
\newcommand{\tr}{\mathop{\mathrm{tr}}\nolimits}
\def\<{\mathopen{}\left<}
\def\>{\right>\mathclose{}}
\def\({\mathopen{}\left(}
\def\){\right)\mathclose{}}
\newtheorem{conclusion}{Conclusion}
\newtheorem{corollary}{Corollary}
\newtheorem{definition}{Definition}
\newtheorem{example}{Example}
\newtheorem{lemma}{Lemma}
\newtheorem{proposition}{Proposition}
\newtheorem{remark}{Remark}
\newtheorem{question}{Question}
\newtheorem{disclaimer}{Disclaimer}
\numberwithin{equation}{section}
\author{Gon\c calo Oliveira}
\address{Universidade Federal Fluminense}
\email{galato97@gmail.com}
\title{Refined compartmental models, asymptomatic carriers and COVID-19}
\date{April 2020}
\begin{document}
	
	\begin{abstract}
		The goal of this article is to analyze some compartmental models specially designed to model the spread of a disease whose transmission has the same features as COVID-19. 
		The major contributions of this article are: (1) Rigorously find sufficient conditions for the outbreak to only have one peak, i.e. for no second wave of infection to form; (2) Investigate the formation of other waves of infection when such conditions are not met; (3) Use numerical simulations to analyze the different roles asymptomatic carriers can have. We also argue that dividing the population into non-interacting groups leads to an effective reduction of the transmission rates.\\
		As in any compartmental model, the goal of this article is to provide qualitative understanding rather than exact quantitative predictions.
	\end{abstract}

\maketitle

\tableofcontents




\section{Introduction}

\subsection{Context}

Compartmental models, based on ordinary differential equations, have a long history of use in epidemiology. Indeed, it is now almost a century since William Ogilvy Kermack and Anderson Gray McKendrick introduced the well known SIR model, \cite{MK}, on which most compartmental modeling elaborates on. Their relevance stems for their extreme simplicity and ability to capture important qualitative behaviors, rather than their limited capacity to make quantitative predictions. As a matter of fact, the current outbreak of COVID-19, caused by SARS-CoV-2, is teaching us that most models, not only the compartmental ones, are of very limited quantitative predictive capacity. Nevertheless, there are important qualitative features of outbreaks that one may attempt to capture, for instance: under which conditions can one guarantee that there will be only one peak? This stands as a very important question which ought to be answered if one is to effectively prevent a second, or third wave of infections. Specially, having in mind the concerns to be taken once a first wave have passed. The goal of this article is to propose and analyze some compartmental models which have been constructed to capture some important features of diseases such as COVID-19. Its two major characteristics which are not captured by the standard compartmental models are:
\begin{itemize}
	\item[(i)] The long incubation period, during which the individuals that have been exposed do not yet develop symptoms but are already contagious (at least in the second half of that period). Even though the standard compartmental models do not take into account the contagiousness of the exposed individuals, they are easy to modify in order to account for it. 
	\item[(ii)] There is a non-negligible fraction of the contagious population which either has mild symptoms or never develops them, thus passing undetected. These are the so-called asymptomatic carriers. After some period of time they are no longer contagious and, as the infected individuals, developed anti-bodies. This suggests they are, at least for now, immune. The standard compartmental models which have a carrier mode are not suitable for modeling such a behavior.
\end{itemize}

In this article, namely in Sections \ref{sec:SEIR} and \ref{sec:SEIAR}, we analyze some compartmental models that have been specially designed to handle these two features. The resulting compartmental models are more complicated than the standard ones and using a mix of precise and numerical results, we shall answer some fundamentally important questions which we now pose:

\begin{question}\label{que:1}
	Regarding each of the models:
	\begin{itemize}
		\item[(a)] Are there disease-free equilibrium states?
		\item[(b)] Having answered positively to (a), one may ask whether there are necessary or sufficient conditions for these disease free equilibrium states to be stable in some sense.
	\end{itemize}
\end{question}

\begin{question}\label{que:2}
	Again, for each of the models one can ask:
	\begin{itemize}
		\item[(a)] Are there criteria guaranteeing that any outbreak of the epidemic will have a unique peak? 
		\item[(b)] If these criteria are violated, can the outbreak have other separate peaks, i.e. second, third waves and so? 
		\item[(c)] Can these possible later peaks be higher than the first?
	\end{itemize}
\end{question}

Depending on which model one considers, these two questions will be answered separately in Sections \ref{sec:SEIR} and \ref{sec:SEIAR}. For instance, using the more refined model we give in Corollary \ref{cor:One_Peak} and Remark \ref{rem:After_Cor} a quantitative condition ensuring only one peak will form. The broad conclusion is that if all transmission rates are kept sufficiently small, then only one peak form, which answers item (a) of question \ref{que:2}. If this condition is not met, more peaks can form as we show in example \ref{ex:Second_Wave_Second_Model} and these can be higher, thus answering items (b) and (c) of question \ref{que:2} above.\\
As for the next question, it will not be answered in any precise way but we shall give some evidence on how to answer it based on numerical results. This will be done in the examples \ref{ex:Realistic}, \ref{ex:Realistic2} and \ref{ex:Second_Wave_Second_Model} presented in subsection \ref{ss:Controling_Outbreak_A}.

\begin{question}\label{que:3}
	What is the role of asymptomatic carriers? Can they help the spread of the disease and thus make the outbreak worse by anticipating and increasing its peak and/or can they shield the rest of the population by creating herd immunity.
\end{question}

One other major question which one poses now is related to the quantifying by how much specific policies reduce the rate of transmission. 

\begin{question}\label{que:4}
	Is there a strategy to effectively reduce the rate of transmission in a manner that can be quantified?
\end{question} 

Broadly speaking, for the general strategy or policy it is hard, if not impossible, to make such a quantitative analysis. For instance, how can we predict by how much the use of masks by the general population reduces the rate of transmission? or washing hands? Indeed, it is impossible to say it a-priori. We only know that these measures do reduce the rate of transmission but not by how much. Having this in mind, we propose one further measure whose contribution to the reduction of the transmission rate can be quantified using the models. The ``idea'' is to split the population into $n$ different groups which are supposed to never interact, this reduces the contact rate of any individual by a factor of $n$ thus reducing the overall transmission rates by a factor of $n$. This intuitive argument can be made more precise using the models we use. The last section of this article, Section \ref{sec:Groups}, is dedicated to making such an argument and running a simulation. We shall also explain how this relates to our answer to the questions \ref{que:1} and \ref{que:2} above.

\subsection{A note on the timing of this article}

Once we are through the worst part of the first wave of the COVID-19 outbreak with relative success, meaning that only a small fraction of the population got infected and so we cannot count with group immunity, the question we must pose is: How can we proceed and go back to a relative normality so that a putative second wave is as mild as possible?\\
I hope that the answers to question \ref{que:2} above, stated as Proposition \ref{lem:Only_One_Maximum} and as Corollary \ref{cor:One_Peak} depending on the model, may serve as good indication that it is possible, but difficult, to have only one peak. They quantify and show what one could already suspect, that the way to do so is to reduce all the transmission rates. It is enough that one transmission rate be high for the outbreak to be out of control, see Conclusion \ref{conclusion:Second_Model}. How to keep the transmission rates small remains a formidable challenge which I do not attempt at answering.\\
As said before, there is no way to quantify how most policies reduce the transmission rates. In that direction Section \ref{sec:Groups} explains a strategy, which may be difficult to implement, but whose effect on the transmission rate can be pined down.

\begin{disclaimer}
	I understand that the strategy of splitting the population into non-interacting groups is of difficult implementation. I do believe it cannot be literally implemented in a society which respects our civil liberties and so would remain at the level of a request to the population, which one would hope to be understood. Of course, in that scenario the different groups would interact, even if mildly. Having such considerations into account leads to modifications of the proposed solutions which will consequently modify the models one should use into mathematically more intractable ones. We may hope that the increased mathematical sophistication of those models will not lead to substantially different qualitative outcomes but instead quantitative ones. In any case, one must understand that the models are simply that: attempts to model a reality which is much more complex. They are, by no means, a truthful reflection of the real world.
\end{disclaimer}

\subsection*{Acknowledgment}
	I want to thank \'Alvaro Kruger Ramos for his comments on an earlier version of this manuscript.\\ 
	Gon\c{c}alo Oliveira is supported by Funda\c{c}\~ao Serrapilheira 1812-27395, by CNPq grants 428959/2018-0 and 307475/2018-2, and by FAPERJ through the grant Jovem Cientista do Nosso Estado E-26/202.793/2019.

\section{A discussion of possible compartmental models}

Recall that the main goal of this article is to find a good and simple compartmental models which realistically capture the properties of a disease similar to COVID-19, at least in terms of the dynamics of transmission. We intend to capture the following:

\begin{itemize}
	\item[(i)] Account for the contagiousness of exposed individuals which may have a long incubation period;
	\item[(ii)] Account for asymptomatic carriers;
	\item[(iii)] Account for the possibly non-negligible mortality rate associated with the disease. 
\end{itemize} 

In fact, looking at a time scale of a few months it is actually conceivable that the background birth and death dynamics be negligible when compared with the death rate associated with the disease itself. Thus, in such a time scale, it is perhaps more realistic to only associate a death rate to the infected population. For an introduction to the use of compartmental models in epidemiology se for example \cite{C} and \cite{AR}.

\subsection{The modified SEIR model}

We shall start by constructing a version of the SEIR model where $S$, $E$, $I$, $R$ represent the fraction of the population which is susceptible, exposed, infected and recovered respectively. In order to keep the system as simple as possible one may attempt to regard the asymptomatic carriers as being part of the exposed population. For this one must regard each exposed individuals as representing an average of someone that is in the incubation period and an asymptomatic carrier. 
We are then led to the following system
\begin{align}\label{eq:ODE1}
\dot{S} & =  - \beta_i I S - \beta_e E S  \\ \label{eq:ODE2}
\dot{E} & = \beta_i I S + \beta_e E S - a E \\ \label{eq:ODE3}
\dot{I} & = a E - \gamma I - \mu I \\ \label{eq:ODE4}
\dot{R} & = \gamma I ,
\end{align}
for some positive functions of time $\beta_e$, $\beta_i$, $a$, $\gamma$. The parameters $\beta_e$, $\beta_i$ represent the transmission rates for the exposed and infected respectively, $a$ denotes the rate of transition from the exposed to infected state and $\gamma$ the rate of recovery. For example, in the easiest case, one would take
$$\beta_i = \frac{\overline{\beta_i}}{S(t)+E(t)+I(t)+R(t)}, \ \text{and} \ \beta_e = \frac{\overline{\beta_e}}{S(t)+E(t)+I(t)+R(t)}, $$
for some constants $\overline{\beta_e}$, $\overline{\beta_i}$. This is not the standard SEIR model as we are forcefully making the exposed interact with the susceptible as the first of these are assumed to be contagious. The analysis of this model is the scope of Section \ref{sec:SEIR}.

\subsection{The SEIAR model}\label{ss:SEIAR}

There are at least two easy ways to further refine this model. First, we may split the exposed into two groups, those which are not yet infectious and those which already are. Secondly, we may have some of these exposed passing directly to the recovered state to account for the asymptomatic carriers which never develop symptoms and have therefore passed undetected. Thirdly, we may distinguish the asymptomatic carriers in a more effective way as they may remain so for a longer period of time. 

\subsubsection{Splitting the exposed into groups (SEEIR)}

We may attempt at splitting the exposed into two new compartments, the initially exposed $E_i$ and the final exposed state $E_f$. The idea is that the initially exposed ones are not yet contagious while the final are. To account for the first two requirements above we propose the following system. 

\begin{align*}
\dot{S} & =  - \beta_i I S - \beta_e E_f S  \\
\dot{E}_i & = \beta_i I S + \beta_e E_f S - a_i E_i \\
\dot{E}_f & = a_i E_i - a_f E_f - \gamma_e E_f \\
\dot{I} & = a_f E_f - \gamma_i I - \mu I \\
\dot{R} & = \gamma_i I + \gamma_e E_f.
\end{align*}

I expect this model to be quite good at capturing many qualitative the phenomena. However, given the interest in explicitly analyzing the asymptomatic carriers we shall instead focus on a different model.  

\subsubsection{Having the asymptomatic carriers separate (SEAIR)}\label{subsubsec:Asymptomatic}
The goal of this model is to better capture the effects of asymptomatic carriers. In the case of COVID-19 we do know this to be a non-trivial part of all those which are contagious. However, given the small number of anti-bodies tests that have been made so far, not much quantitative knowledge regarding these is known.\\
In order to account for the asymptomatic carriers we consider one more compartment of the population, denoted by $A$, which represents the fraction of the population which is an asymptomatics carrier. Then, we propose the following model

\begin{align*}
\dot{S} & =  - \beta_i I S - \beta_e E S - \beta_a A S \\ 
\dot{E} & = \beta_i I S + \beta_e E S + \beta_a A S - a_i E - a_a E \\ 
\dot{A} & = a_a E - \gamma_a A \\ 
\dot{I} & = a_i E - \gamma_i I - \mu I \\
\dot{R} & = \gamma_i I + \gamma_a A ,
\end{align*}

for some positive functions of time $\beta_i$, $\beta_e$, $\beta_a$, $a_i$, $a_a$, $\gamma_i$, $\gamma_a$, $\mu$ which need not be constants. This system will be carefully analyzed in Section \ref{sec:SEIAR}. For now, we shall simply comment on the biological meaning of all the parameters on the model. Similar remarks hold for the parameters of the simpler SEIR type model. The $\beta_i$, $\beta_e$, $\beta_a$ respectively represent the transmission rates of the infected, exposed (in th incubation period), and asymptomatic individuals. The parameters $a_i$, $a_a$ represent the rate at which the exposed individuals respectively pass to the infected and asymptotic compartments. In particular, $\tfrac{a_i}{a_i+a_a}$ and $\tfrac{a_a}{a_i+a_a}$ respectively represent the fraction of the population which if exposed will eventually become infected or asymptomatic. The quantity $\gamma_i^{-1}$ represents the average time an infected individual takes to recover, while $\gamma_a^{-1}$ is the average time an asymptomatic individual takes to stop being contagious. Finally, $\mu$ stands for the death rate associated with the disease. For example, in the easiest case one would take 
$$\beta_i = \frac{\overline{\beta_i}}{S(t)+E(t)+A(t)+I(t)+R(t)}$$
for a constant $\overline{\beta_i}$ with similar formulas for $\beta_e$ and $\beta_a$. For convenience, we show in figure \ref{fig:Diagram} a diagrammatic presentation of this model.
\begin{figure}[h]
	\centering
	\includegraphics[width=0.5\textwidth,height=0.3\textheight]{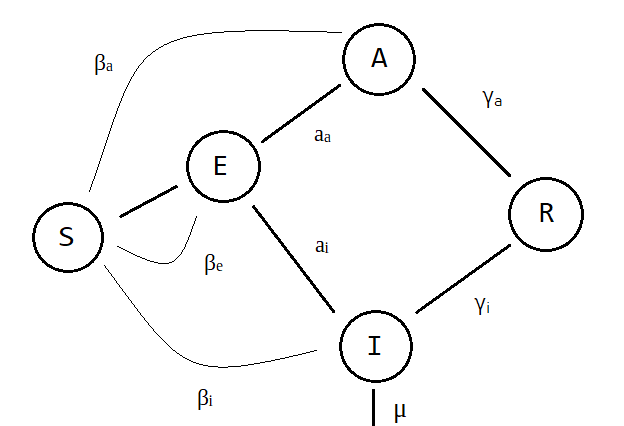}
	\caption{Diagram of the SEIAR model.}
	\label{fig:Diagram}
\end{figure}
A somewhat similar model have also recently been used in \cite{Re} and in \cite{Vo}. The first of these attempted at predicting the fraction of undocumented infections for COVID-19 in China. One other similar version of it using Markov Chains have also recently been used to predict the spatiotemporal spread of COVID-19 in Spain, see \cite{Ae} for this work.

\subsection{Refining the models}

There are a few natural ways to further refine the models proposed above.

\subsubsection{With exposed split and asymptomatic carriers separate (SEEAIR)}\label{subsubsec:Combined}

This is a combination of the previous two models proposed in subsection \ref{ss:SEIAR}. This models also splits the exposed population in two groups and includes the asymptomatic carriers as well.

\begin{align*}
\dot{S} & =  - \beta_i I S - \beta_e E_f S - \beta_a A S \\ 
\dot{E}_i & = \beta_i I S + \beta_e E_f S + \beta_a A S - a E \\
\dot{E}_f & = a E_i - a_f E_f - a_a E_f \\ 
\dot{A} & = a_a E - \gamma_a A \\ 
\dot{I} & = a_i E - \gamma_i I - \mu I \\ 
\dot{R} & = \gamma_i I + \gamma_a A .
\end{align*}

This differs from the previous model by the splitting $E=E_i+E_f$ with the $E_i$ not yet infectious. The $E_f$ may then become visibly infected ($I$) or only exhibit mild symptons or even completely asymptomatic ($A$).

\subsubsection{With birth and death dynamics}

We can further refine the previous model by including a birth and death dynamics. We assume there is a birth rate $\Lambda$ of healthy individuals and a normal death rate $\mu_n$ by other conditions totally unrelated to the disease which affects all the population. In this manner the model becomes

\begin{align*}
\dot{S} & = \Lambda - \beta_i I S - \beta_e E_f S - \beta_a A S - \mu_n S \\ 
\dot{E}_i & = \beta_i I S + \beta_e E_f S + \beta_a A S - a E - \mu_n E_i \\
\dot{E}_f & = a E_i - a_f E_f - a_a E_f - \mu_n E_f \\ 
\dot{A} & = a_a E_f - \gamma_a A - \mu_n A \\ 
\dot{I} & = a_i E_f - \gamma_i I - \mu I -\mu_n I\\ 
\dot{R} & = \gamma_i I + \gamma_a A -\mu_n R.
\end{align*}

As we are interested in running the model for short periods of time, such as a few months, I expect the birth rate to be relatively small and so negligible. Similarly, as the normal death rate equally affects the whole populating I believe that including it in the model will simply lead to unimportant complications.

%
%
%

\section{The simplest somewhat realistic model}\label{sec:SEIR}

The goal of this section is to analyze the model proposed in subsection \ref{sec:SEIR}. For convenience we recall here the system of ordinary differential equations ruling it.

\begin{align}\label{eq:ODE1}
\dot{S} & =  - \beta_i I S - \beta_e E S  \\ \label{eq:ODE2}
\dot{E} & = \beta_i I S + \beta_e E S - a E \\ \label{eq:ODE3}
\dot{I} & = a E - \gamma I - \mu I \\ \label{eq:ODE4}
\dot{R} & = \gamma I .
\end{align}

It is straightforward to check that all $S,E,I,R$ remain nonnegative and that $S+E+I+R $ is non-increasing and so $S+E+I+R \leq S(0)+E(0)+I(0)+R(0) =1$.

\subsection{The linearized system}
From the last equation above, i.e. \ref{eq:ODE4}, a critical point 
$$(S,E,I,R)=(S_c,E_c,I_c,R_c)$$ 
will certainly have $I_c=0$ which then also gives $E_c=0$ from the semi-last equation \ref{eq:ODE3}. Then, the reaming equations \ref{eq:ODE1} and \ref{eq:ODE2} we conclude that $R_c$ and $S_c$ can be any two constants so that $R_c+S_c \leq 1$. These are disease free equilibrium points and are obviously not isolated. Thus, the Grobman-Hartmann theorem cannot be applied but I believe the linearization to still carry important information. Thus, we shall linearize the system around such an equilibrium and use it to define the following notion.

\begin{definition}
	Given a disease free equilibrium point $(S_c , 0, 0, R_c)$ is called infectiously-stable if the associated linear system is such that any of its solutions $(s,e,i,r)$ satisfies
	$$\lim_{t \to + \infty} i(t)=0.$$
\end{definition}

Intuitively, we may think of this condition as saying that perturbing away from the equilibrium always makes the disease to get extinct. In what follows of this linear analysis we shall suppose $\beta_i$, $\beta_e$ are constant which is a reasonable assumption for large $t \gg 1$. In order to study their stability, we shall now linearize the equations around these equilibria. The linearized system is given by
\begin{align*}
\dot{s} & = - \beta_i S_c i - \beta_e S_c e  \\
\dot{e} & = \beta_i S_c i + \beta_e S_c e  - a e \\
\dot{i} & = a e - (\gamma +\mu ) i \\
\dot{r} & = \gamma i .
\end{align*}
Clearly, the dynamics of $(e,i)$ controls the whole dynamics which evolve independently. The stability is therefore dependent on the sign of the eigenvalues. As we want both of these two be negative we need to have $\det(A)>0$ and $\tr(A)<0$, where $A$ is the matrix 
$$A= \begin{pmatrix}
\beta_e S_c - a & \beta_i S_c  \\
a & - (\gamma +\mu )
\end{pmatrix},$$
which controls the system. Thus, the condition for stability is
\begin{align*}
- (\gamma +\mu )(\beta_e S_c - a) - a \beta_i S_c & > 0 \\
\beta_e S_c - (\gamma +\mu ) & < 0,
\end{align*}
which we can rewrite as
\begin{align*}
a \beta_i S_c & < (\gamma +\mu )(a-\beta_e S_c) \\
\beta_e S_c & < (\gamma +\mu ) .
\end{align*}
The second of this equations says that the rate at which the exposed infect other people must be smaller than the rate of recovery (plus the death rate). The first says that not only the exposed must be infecting people slower than they pass to the infected state ($\beta_eS_c < a$) but also, the already infected must infect others at a very very slow rate when compared to the recovery rate. More precisely, we have obtained the following result.

\begin{lemma}
	For the generic values of $\beta_e$, $\beta_i$, $a$, $\gamma$, $\mu$. The disease free equilibrium $(S_c , 0, 0, R_c)$ is infectiously-stable if 
	\begin{equation}\label{eq:Upper_Bound_S_c}
	S_c < \min \big\lbrace \frac{\gamma+\mu}{\beta_e} , \frac{a(\gamma+ \mu)}{a \beta_i + (\gamma+\mu) \beta_e} \big\rbrace .
	\end{equation}
	In particular, if both $\frac{\gamma+\mu}{\beta_e}$ and $\frac{a(\gamma+ \mu)}{a \beta_i + (\gamma+\mu) \beta_e}$ are greater than one, we find that any disease free equilibrium is stable.
\end{lemma}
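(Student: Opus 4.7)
The plan is to apply the Routh--Hurwitz criterion to the $2\times 2$ subsystem in $(e,i)$, which, as already observed, decouples from $(s,r)$ in the linearization around any disease-free equilibrium. By definition, infectious stability only asks for $i(t)\to 0$, and this is automatic whenever both eigenvalues of the matrix $A$ have strictly negative real part: then the entire orbit $(e(t),i(t))$ decays exponentially to the origin, the polynomial correction appearing only in the boundary case of coinciding eigenvalues and not affecting the limit. This is why I regard the ``generic'' qualifier in the statement as essentially cosmetic.

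For a real $2\times 2$ matrix, both eigenvalues have negative real part iff $\tr A<0$ and $\det A>0$. A direct computation from the displayed $A$ gives
$$\tr A = \beta_e S_c - a - (\gamma+\mu), \qquad \det A = (\gamma+\mu)(a - \beta_e S_c) - a\beta_i S_c.$$
Reading $\det A>0$ as a linear inequality in $S_c$ yields exactly $S_c < a(\gamma+\mu)/\bigl(a\beta_i + (\gamma+\mu)\beta_e\bigr)$, which is the second term inside the minimum in \eqref{eq:Upper_Bound_S_c}. For the trace condition I would use the cleaner (strictly stronger than needed) bound $\beta_e S_c<\gamma+\mu$, equivalent to $S_c<(\gamma+\mu)/\beta_e$; since $a>0$ this at once implies $\beta_e S_c<a+(\gamma+\mu)$, hence $\tr A<0$. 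Combining the two, the minimum displayed in the statement forces both Routh--Hurwitz inequalities, and hence infectious stability.

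The final sentence of the lemma is then a one-liner: since $S$ is a population fraction and $S+E+I+R$ is non-increasing from $1$, any admissible critical value satisfies $S_c\le 1$, so whenever the two bounds both exceed $1$ the hypothesis $S_c<\min\{\,\cdot\,,\,\cdot\,\}$ is automatic. The only place where care is needed, rather than any genuine obstacle, is the trade-off just made: the trace condition from Routh--Hurwitz is strictly weaker than the bound $(\gamma+\mu)/\beta_e$ appearing in the statement, and one must be willing to absorb that slack in exchange for the clean product-free expression in the minimum.
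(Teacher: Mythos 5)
Your proof is correct and follows essentially the same route as the paper: linearize, observe that the $(e,i)$ subsystem decouples, and impose $\tr A<0$, $\det A>0$ on the displayed $2\times 2$ matrix. Your explicit remark that the bound $S_c<(\gamma+\mu)/\beta_e$ is strictly stronger than the true trace condition $\beta_e S_c<a+(\gamma+\mu)$ is a point the paper glosses over (it writes the trace inequality without the $-a$ term), so you have in fact justified the stated sufficient condition slightly more carefully than the original.
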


\begin{remark}\label{conclusion:First_Model}
	It is possible to reduce the population of those infected keeping many people $S_c<S(0)$ still susceptible (without ever having been infected). One way to do that is to have the transmission rate of those exposed small when compared to the rate of recovery, i.e.
	$$\frac{\beta_e}{\gamma + \mu}  < 1 .$$
	Furthermore, one must have the rate of transmission of those infected even smaller when compared with rate of recovery, namely that
	$$ \beta_i< (\gamma + \mu)\left(1-\frac{\beta_e}{a}\right) , $$
	which in particular also assumes that $\beta_e < a$, i.e. that the rate at which the exposed infect the susceptible is small when compared with the rate of transition from the exposed to infected state. We may also rewrite this last condition as
	$$\frac{\beta_i}{\gamma+\mu} + \frac{\beta_e}{a} < 1,$$
	which suggests regard the left-hand-side as an analogue of the basic reproduction rate for this model.\\
	Achieving these conditions may, however, be a difficult task to accomplish. If they are not, then the final population of susceptible may be very low which is quite bad.	
\end{remark}

\subsection{Global stability}\label{subsec:Global_Stability}

We shall now find a sufficient condition for a solution to the system \ref{eq:ODE1}--\ref{eq:ODE4} to asymptotically approach a disease free equilibrium state. With this in mind, it is convenient to abstract such a notion as follows. 

\begin{definition}
	A solution $(S,E,I,R)$ to \ref{eq:ODE1}--\ref{eq:ODE4} is called asymptotically disease free if 
	$$\lim_{t \to + \infty} E(t)+I(t) =0 .$$
\end{definition}

\begin{proposition}
	Let $(S,E,I,R)$ be a solution to \ref{eq:ODE1}--\ref{eq:ODE4} satisfying 
	\begin{equation}\label{eq:assumption0}
	\sup_{t \geq 0}\left( \frac{\beta_i}{\gamma+\mu} \right)  + \sup_{t \geq 0}\left( \frac{\beta_e}{a} \right)  < 1.
	\end{equation}
	Then, $(S,E,I,R)$ is asymptotically disease free.
\end{proposition}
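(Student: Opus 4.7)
The plan is to exhibit a Lyapunov-type linear combination $V(t) = \alpha E(t) + \beta I(t)$ of the two ``sick'' compartments, choose the weights $\alpha,\beta>0$ so that $\dot V \leq -\kappa V$ uniformly in $t$, and then invoke Gr\"onwall. First, I will record the easy a priori bound $0\leq S(t)\leq 1$: nonnegativity of all coordinates follows from \eqref{eq:ODE1}--\eqref{eq:ODE4} as already noted in the paper, and $S$ is nonincreasing since $\dot S\leq 0$, so $S(t)\leq S(0)\leq 1$.

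Second, I compute directly from \eqref{eq:ODE2}--\eqref{eq:ODE3} that
\begin{equation*}
\dot V \;=\; \bigl[\alpha(\beta_e S-a)+\beta a\bigr] E \;+\; \bigl[\alpha\beta_i S-\beta(\gamma+\mu)\bigr] I,
\end{equation*}
and bound $S\leq 1$ to get
\begin{equation*}
\dot V \;\leq\; \bigl[\beta_e-(1-\beta/\alpha)\,a\bigr]\alpha E \;+\; \bigl[\beta_i-(\beta/\alpha)(\gamma+\mu)\bigr]\alpha I.
\end{equation*}
The point of the hypothesis \eqref{eq:assumption0} is precisely that it lets me pick a ratio $\theta:=\beta/\alpha$ satisfying
\begin{equation*}
\sup_{t\geq 0}\frac{\beta_i}{\gamma+\mu}\;<\;\theta\;<\;1-\sup_{t\geq 0}\frac{\beta_e}{a}.
\end{equation*}
With this choice, each bracket above becomes negative uniformly in $t$: explicitly, the $E$-coefficient is $\leq -c_1\,a(t)$ and the $I$-coefficient is $\leq -c_2\,(\gamma(t)+\mu(t))$ for the positive constants $c_1=1-\theta-\sup(\beta_e/a)$ and $c_2=\theta-\sup\bigl(\beta_i/(\gamma+\mu)\bigr)$.

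Third, assuming (as is implicit in the model) that $a$, $\gamma+\mu$ are bounded below by positive constants, one obtains $\dot V\leq -\kappa\,V$ for some $\kappa>0$, so $V(t)\leq V(0)e^{-\kappa t}\to 0$, which forces $E(t)+I(t)\to 0$ since $V\geq \min(\alpha,\beta)(E+I)$. The step I expect to be the main obstacle, or at least the one requiring care, is handling genuinely time-dependent $a,\gamma,\mu$ that are not uniformly bounded away from zero: in that case one keeps the estimate in the form $\dot V\leq -c_1 a(t)E-c_2(\gamma+\mu)(t)I$ and argues via the integral version of Gr\"onwall together with a divergence condition such as $\int_0^\infty a\,ds=\int_0^\infty(\gamma+\mu)\,ds=\infty$, which is the natural minimal assumption making the Lyapunov decay survive. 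Everything else is bookkeeping around the choice of $\theta$ and the constants.
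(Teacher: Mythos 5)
Your proof is correct and is essentially the paper's own argument: your $V=\alpha E+\beta I$ with ratio $\theta=\beta/\alpha$ is exactly the paper's Lyapunov function $L_\epsilon=E+\epsilon I$ up to normalization, with the same computation of $\dot V$, the same use of the hypothesis to pick the weight, and the same Gr\"onwall conclusion. If anything you are slightly more careful than the paper, which asserts $\dot L_\epsilon\leq-\delta L_\epsilon$ for some $\delta>0$ without noting that this requires $a$ and $\gamma+\mu$ to be bounded away from zero when these coefficients are genuinely time-dependent --- a point you correctly flag and handle.
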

\begin{proof}
	Let $\epsilon \in (0,1)$ to be fixed later and consider the function $L_{\epsilon}(t)= E(t) + \epsilon I(t)$. Then, using equations \ref{eq:ODE2} and \ref{eq:ODE3} we compute
	\begin{align*}
	\dot{L_\epsilon} & = \dot{E} + \epsilon \dot{I} \\
	& = (\beta_eES + \beta_i I S -aE) + \epsilon( a E - (\gamma+\mu)I) \\
	& = a E \left( \frac{\beta_e}{a} S - (1-\epsilon) \right) + (\gamma+\mu) I \left( \frac{\beta_i}{\gamma+\mu} S - \epsilon \right) .
	\end{align*}
	Then, by the assumption \ref{eq:assumption0} there is $\epsilon \in (0,1)$ such that $\sup_{t \geq 0}\left( \frac{\beta_i}{\gamma+\mu}\right) S(0) <  \epsilon$ and $\sup_{t \geq 0}\left(  \frac{\beta_e}{a} \right) S(0) < 1 - \epsilon$. Picking such an $\epsilon$, the above computation shows that $L_{\epsilon}$ is decreasing and satisfies an inequality of the form
	$$\dot{L_\epsilon} \leq - \delta L_{\epsilon},$$
	for some $\delta >0$. Then, Gr\"onwall's inequality yields that $L_{\epsilon} \leq L_{\epsilon}(0) e^{-\delta t}$ and thus converges to $0$ as $t \to \infty$. In particular, as both $E$ and $I$ are non-negative we find that both of these must converge to $0$.
\end{proof}

\begin{remark}
	The previous proof actually shows that it is enough that 
	$$\sup_{t \geq 0}\left( \frac{\beta_i}{\gamma+\mu} S(t) \right)  + \sup_{t \geq 0}\left( \frac{\beta_e}{a} S(t) \right)  < 1,$$
	for the same conclusion to hold. Furthermore, it shows that under the above hypothesis, both $E$ and $I$ exponentially converge to $0$. 
\end{remark}

\subsection{Examples and numeric simulations}

\begin{example}\label{ex:First}
	Consider this model with $S(0)=0.99$, $I(0)=0$, $E(0)=0.01$ and $R(0)=0$ (for small nonzero $I(0),E(0)$ the outcome seems to not be very dependent on the precise value) whose simulation is shown in figure \ref{fig:First}. In this precise case, only a very small fraction of the population got away without being infected. 
	\begin{figure}[h]
	\centering
	\includegraphics[width=0.4\textwidth,height=0.25\textheight]{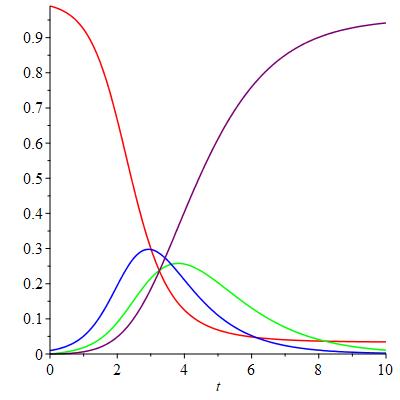}
	\caption{Example when $\beta_i=0.9$, $\beta_e=2.5$, $a=1$, $\gamma=0.9$, $\mu=0.1$ which has 	
		$$\frac{\beta_e}{\gamma+\mu} \sim 2.78 , \ \ \frac{\beta_i}{\gamma+\mu}+ \frac{\beta_e}{a} \sim 3.5,$$ 
		and initial conditions $S(0)=0.99$, $I(0)=0$, $E(0)=0.01$ and $R(0)=0$. The red, blue, green and purple curves respectively denote the fraction of the population composed of susceptible, exposed, infected and recovered.}
	\label{fig:First}
	\end{figure}
	Indeed, form equation \ref{eq:Upper_Bound_S_c} for the stability of the disease free equilibrium we find that it must satisfy
	\begin{equation}
	S_c < \min \big\lbrace \frac{\gamma+\mu}{\beta_e} , \frac{a(\gamma+ \mu)}{a \beta_i + (\gamma+\mu) \beta_e} \big\rbrace =\frac{2}{7} ,
	\end{equation}
	which is in agreement with what we see in figure \ref{fig:First}.\\
	On the other hand, we may consider the same model but supposing that social isolation and other measures have been put in place to reduce the transmission rate. In figure \ref{fig:First2} we run one such example.
	\begin{figure}[h]
		\centering
		\includegraphics[width=0.4\textwidth,height=0.25\textheight]{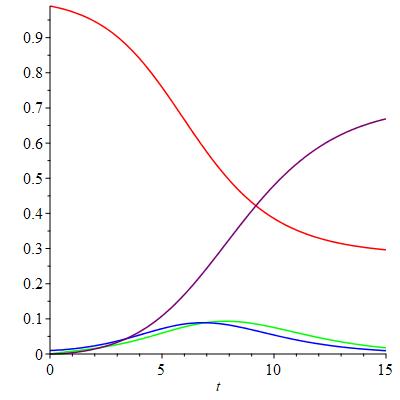}
		\caption{Example when $\beta_i=0.45$, $\beta_e=1.25$, $a=1$, $\gamma=0.9$, $\mu=0.01$ which has 	
			$$\frac{\beta_e}{\gamma+\mu} \sim 1.389 , \ \ \frac{\beta_i}{\gamma+\mu}+ \frac{\beta_e}{a} \sim 1.75,$$
			and initial conditions $S(0)=0.99$, $I(0)=0$, $E(0)=0.01$ and $R(0)=0$. The color code is that of figure \ref{fig:First}.}
		\label{fig:First2}
	\end{figure}
	In this case, equation \ref{eq:Upper_Bound_S_c} gives that the stable disease free equilibrium to which the model converges satisfies $S_c<\tfrac{4}{7}$ which is compatible with what can be inferred from figure \ref{fig:First2}.
\end{example}

\begin{remark}
	It would be interesting to actually get a lower bound on $S_c$ rather than an upper bound.
\end{remark}

\subsubsection{Example with two peaks}

We could now try and make $\beta_i$ and $\beta_e$ periodic in a way that, on average, the inequalities in Conclusion \ref{conclusion:First_Model} hold true. In that case, I expect that, after the peak, the number of infected people will also decrease, in average, with time. Of course, it may be that, with no extra measures, the actual values of $\beta_e$ and $\beta_i$ are so high that by opening up (even if only for a short period of time) will make the task of making the inequalities above hold in average very difficult. It is exactly this scenario that we explore in the next example where after a strict control of the transmission rate, it is once again allowed to become large. This is an alert to the dangers of a non-careful reopening, after social distancing measures had been put in place, for an insufficient time so that herd immunity develops. In this example we can actually see that the second peak can be made larger than the first which provides an answer to the items (b) and (c) in question \ref{que:2}.

\begin{example}\label{ex:Second_Peak_Easy_Model}
	Figure \ref{fig:Second_Peak} shows an example where after seeing positive signs in the decrease in the number of infected, the rate of transmission is allowed to increase. This leads to a formation of a second peak of infections which ends up infecting almost everyone.\footnote{This model is made with $\gamma=0.9$, $\mu=0.01$, $a=1$ and $\beta_i=0.9(\tfrac{H(1-t)}{10}+H(t-5))$, $\beta_e=2.5(\tfrac{H(1-t)}{10}+H(t-5))$, where $H(t)=\tfrac{1}{10}+S_{23}(t)$ for $S_{23}$ the first 23 terms of the Fourier series in $[-20,20]$ of the Heaviside function.} 
	\begin{figure}[h]
		\centering
		\includegraphics[width=0.4\textwidth,height=0.25\textheight]{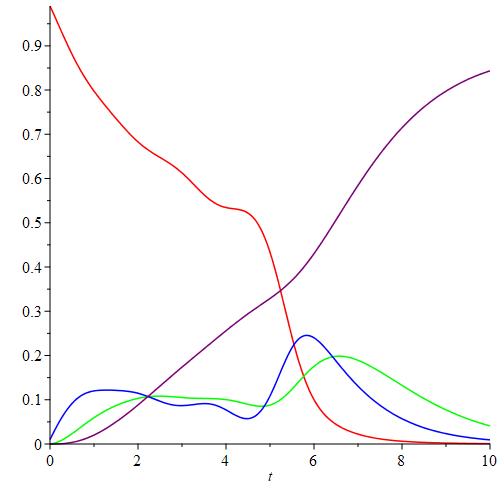}
		\caption{Example of an un-careful reopening which leads to a second peak.}
		\label{fig:Second_Peak}
	\end{figure}
	Indeed, we can see in the plot in figure \ref{fig:Criteria_1} that the conditions in Conclusion \ref{conclusion:First_Model} are violated right before time $t=5$. This leads to a change of tendency which will makes the curve of infected restart increasing and eventually leading to the formation of the second peak. This is an example of something we want to avoid.
	\begin{figure}[h]
		\centering
		\includegraphics[width=0.4\textwidth,height=0.25\textheight]{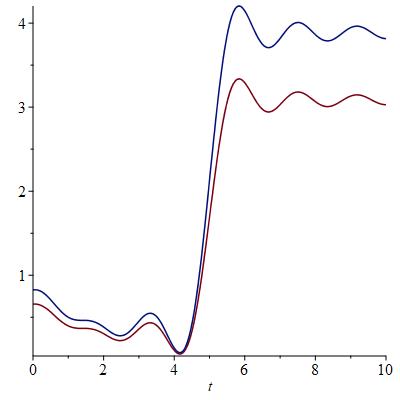}
		\caption{Plots of $\frac{\beta_e}{\gamma+\mu}$ and $\frac{\beta_i}{\gamma+\mu} + \frac{\beta_e}{a}$ which one can see to surpass the value $1$ before $t=5$.}
		\label{fig:Criteria_1}
	\end{figure}
\end{example}

\begin{remark}
	This example leads to a fundamental question. Is there a criteria which guarantees that no second peak will form? This is precisely the problem raised in th problem (a) by question \ref{que:2}. Preferably a criteria depending on $\beta_e$ and $\beta_i$ which are the quantities that can be somewhat controlled using social distancing and other measures. We shall answer this question next in Proposition \ref{lem:Only_One_Maximum}.
\end{remark}

\subsection{Rigorous qualitative results}

First, notice that from the first equation \ref{eq:ODE1} we find that $\dot{S} \leq 0$ so $S(t)$ is decreasing. Consider the asymptotic fraction of susceptible population 
$$S_c := \lim_{t \rightarrow + \infty} S(t) , $$
and recall that it is to our interest that $S_c>0$ and as big as possible (of course $S_c \leq S(0) \leq 1$). Similarly, we shall define $I_c$ and $E_c$ as the asymptotic values of $I$ and $E$ if they exist.

\begin{lemma}\label{lem:E_I_to_zero}
	Suppose that $S_c >0$, then $I_c=0=E_c$.
\end{lemma}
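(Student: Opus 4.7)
The plan is to rewrite the first equation in a separated form to convert the hypothesis $S_c>0$ into an integrability condition on $E$ and $I$, and then upgrade integrability to convergence to zero via a uniform continuity argument.

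First, I would observe that since $S(t)>0$ for all $t\geq 0$, equation \ref{eq:ODE1} can be written as $\dot S/S = -(\beta_i I + \beta_e E)$. Integrating,
\begin{equation*}
S(t)=S(0)\exp\!\left(-\int_0^t \bigl(\beta_i(\tau)I(\tau)+\beta_e(\tau)E(\tau)\bigr)\,d\tau\right).
\end{equation*}
Taking $t\to+\infty$ and using $S_c>0$ gives the finite bound $\int_0^{\infty}(\beta_i I+\beta_e E)\,d\tau=\log(S(0)/S_c)<\infty$. Since for the lemma we may work under the standing assumption that $\beta_e,\beta_i$ are bounded below by positive constants (this is the generic setting used throughout the section), it follows that both $E$ and $I$ belong to $L^1([0,\infty))$.

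Next I would promote the integrability of $E$ and $I$ to pointwise decay. Because $S,E,I,R\in[0,1]$ and all the coefficients $\beta_e,\beta_i,a,\gamma,\mu$ are bounded, equations \ref{eq:ODE2} and \ref{eq:ODE3} show that $\dot E$ and $\dot I$ are uniformly bounded on $[0,\infty)$; hence $E$ and $I$ are uniformly continuous. An integrable, nonnegative, uniformly continuous function on $[0,\infty)$ must tend to zero (this is Barbalat's lemma: if the limit were not zero there would be a sequence $t_n\to\infty$ with $E(t_n)\geq \delta>0$, and uniform continuity would produce intervals of fixed length around each $t_n$ on which $E\geq \delta/2$, contradicting $E\in L^1$). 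Applying this to $E$ and $I$ separately yields $E_c=0=I_c$.

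The only subtlety, and what I would flag as the main point to be careful about, is the handling of the time-dependence of $\beta_e,\beta_i$. If these coefficients were allowed to vanish they could cancel a non-integrable $E$ or $I$ and the argument for $L^1$ membership would collapse; this is why the hypothesis that the transmission rates are bounded below by positive constants (or at worst have $\liminf>0$) is used implicitly. Under that mild assumption the proof is robust; the rest is standard ODE reasoning plus Barbalat.
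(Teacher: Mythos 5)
Your proof is correct and starts from exactly the same identity as the paper: integrating equation \ref{eq:ODE1} to get $S(t)=S(0)\exp\bigl(-\int_0^t(\beta_i I+\beta_e E)\,ds\bigr)$ and reading off from $S_c>0$ that the integral is finite. Where you diverge is that the paper stops there, merely remarking that $S$ would converge to zero if $E$ and $I$ did not; implicitly it relies on the convention (stated just above the lemma) that $E_c$ and $I_c$ are the limits \emph{if they exist}, so a positive limit would make the integrand eventually bounded below and force divergence. Your Barbalat argument is a genuine strengthening: by bounding $\dot E$ and $\dot I$ to get uniform continuity and combining with $E,I\in L^1$, you prove that the limits exist and equal zero, rather than assuming existence. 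The price is the extra (explicitly flagged, and reasonable in this setting) hypothesis that $\beta_e,\beta_i$ are bounded above and below by positive constants; the paper's one-line version needs the lower bound too, it just never says so. Net assessment: same approach, but your write-up closes a gap the paper leaves open.
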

\begin{proof}
	Again, from the first equation \ref{eq:ODE1} we have
	$$S(t) = S(0) \exp \left( - \int_0^t (\beta_iI+\beta_eE) ds \right).$$
	which will converge to zero if $I$ and $E$ do not themselves uniformly converge to zero.
\end{proof}

Political measures can more easily affect the transmission rate than the recovery and death rate which are more dependent on medical conditions. Thus, in the next result we find a, potentially useful, criteria for having only one peak which does not assume the transmission rates $\beta_i, \beta_e$ to be constant. This provides a rigorous answer to the item (c) of question \ref{que:2}.

\begin{proposition}\label{lem:Only_One_Maximum}
	Suppose that $a, \gamma, \mu$ are all constant and $(S,E,I,R)$ is a nonconstant solution of \ref{eq:ODE1}--\ref{eq:ODE4}. If there is a time $t_\ast$ such that $\dot{I}(t_\ast)<0$ and
	$$S(t_\ast)\leq \inf_{t \geq t_*} \left( \frac{a (\gamma+\mu)}{a \beta_i + (\gamma + \mu) \beta_e} \right),$$
	then, for $t \geq t_\ast$, the fraction of infected population $I(t)$ has at most one critical point. Moreover, if any such critical point exists, then it is a maximum. In particular, if
	\begin{equation}\label{eq:Criteria_No_Second_Peak}
	\sup_{t \geq 0} \left( \frac{\beta_e}{a} + \frac{\beta_i }{\gamma+\mu} \right) \leq 1,
	\end{equation}
	then $I(t)$ at most a unique maximum.
\end{proposition}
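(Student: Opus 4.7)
The plan is to reduce the statement to a linear differential inequality for $\dot I$ and to conclude via a Grönwall-type integrating-factor argument. First I would derive a closed second-order identity for $I$. Since $a,\gamma,\mu$ are constant, differentiating \ref{eq:ODE3} gives $\ddot I = a\dot E - (\gamma+\mu)\dot I$; using the same equation to write $aE = \dot I + (\gamma+\mu)I$ and substituting into the ODE \ref{eq:ODE2} for $\dot E$, a short algebraic rearrangement yields
$$
\ddot I \;=\; \bigl[\beta_e S - a - (\gamma+\mu)\bigr]\,\dot I \;+\; a(\gamma+\mu)\,I\,\Bigl[\tfrac{\beta_i S}{\gamma+\mu}+\tfrac{\beta_e S}{a}-1\Bigr].
$$
This is the key identity: at second order the $I$-dynamics look like a linear ODE in $\dot I$ driven by a source proportional to $I$ and to the bracket on the right, and it is precisely that bracket which the hypothesis of the proposition controls.

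Next I would use \ref{eq:ODE1} to note that $\dot S\leq 0$, so $S$ is non-increasing; the hypothesis $S(t_\ast)\leq \inf_{t\geq t_\ast}\tfrac{a(\gamma+\mu)}{a\beta_i+(\gamma+\mu)\beta_e}$ then gives, for every $t\geq t_\ast$, the bound $S(t)\bigl[\tfrac{\beta_i(t)}{\gamma+\mu}+\tfrac{\beta_e(t)}{a}\bigr]\leq 1$. Combined with $I\geq 0$, the identity reduces to the differential inequality $\ddot I \leq [\beta_e S - a - (\gamma+\mu)]\,\dot I$ on $[t_\ast,\infty)$. Multiplying by the positive integrating factor $M(t):=\exp\bigl(\int_{t_\ast}^{t}[a+(\gamma+\mu)-\beta_e(\tau)S(\tau)]\,d\tau\bigr)$ turns this into $\frac{d}{dt}(M\dot I)\leq 0$. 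Integrating from $t_\ast$ and using $M(t_\ast)=1$ yields $M(t)\dot I(t)\leq \dot I(t_\ast)<0$, and since $M>0$ this forces $\dot I(t)<0$ for every $t\geq t_\ast$. Hence $I$ is strictly decreasing on $[t_\ast,\infty)$ and has no critical point there, so both clauses of the first part of the proposition hold (the second vacuously).

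For the final assertion I would argue that the bound $S\leq 1$ together with $\sup_{t\geq 0}\bigl(\beta_e/a + \beta_i/(\gamma+\mu)\bigr)\leq 1$ makes the bracket in the key identity nonpositive \emph{globally}, not just past some $t_\ast$. If $I$ admitted two local maxima $t_1<t_2$, continuity would produce an intermediate time $t_\ast\in(t_1,t_2)$ with $\dot I(t_\ast)<0$; at this $t_\ast$ the $S$-hypothesis of the first part is automatic, and the integrating-factor argument then gives $\dot I<0$ on $[t_\ast,\infty)$, contradicting a local maximum at $t_2$. Hence $I$ can have at most one local maximum.

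The only nontrivial step is the derivation of the key identity; it is just careful bookkeeping, but signs have to be tracked so that the hypothesis enters as the nonpositivity of one specific bracket. Once the identity is in hand, the integrating-factor step is standard and the rest of the argument reduces essentially to a one-line Grönwall estimate, so I do not expect a substantive conceptual obstacle beyond this algebraic manipulation.
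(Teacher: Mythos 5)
Your proof is correct, and it takes a genuinely different route from the paper's. The paper never writes down a global second-order identity: it computes $\ddot I$ only \emph{at a critical point} of $I$, substitutes $aE=(\gamma+\mu)I$ there, and observes that the hypothesis forces $\ddot I<0$ at every critical point in $[t_\ast,\infty)$; since two strict local maxima would have to be separated by a local minimum (where $\ddot I\ge 0$), at most one critical point can occur, and it is a maximum. Your argument instead establishes the identity $\ddot I=[\beta_e S-a-(\gamma+\mu)]\,\dot I+a(\gamma+\mu)I\bigl[\tfrac{\beta_i S}{\gamma+\mu}+\tfrac{\beta_e S}{a}-1\bigr]$ at \emph{all} times (I checked the algebra; it is right), kills the source term using monotonicity of $S$, and propagates the sign of $\dot I$ by an integrating factor. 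Both proofs turn on the same algebraic heart --- the sign of $a\beta_i S+(\gamma+\mu)(\beta_e S-a)$ --- but your version buys three things: it actually uses the hypothesis $\dot I(t_\ast)<0$ (which the paper's proof never invokes), it yields the strictly stronger conclusion that $I$ is monotone decreasing on all of $[t_\ast,\infty)$ (so there are \emph{no} critical points there, not merely at most one), and it is insensitive to the non-strict inequalities in the hypotheses, where the paper's pointwise argument technically needs $S$ strictly below the threshold to get $\ddot I<0$ rather than $\ddot I\le 0$. The price is the contrapositive detour for the ``in particular'' clause (two maxima force an intermediate time with $\dot I<0$ by the mean value theorem applied on the interval down to the intermediate minimum), which the paper gets directly from $S\le 1$; and, like the paper, you implicitly assume the two putative maxima are strict with $I$ nonconstant between them, which is the only reading under which ``two peaks'' makes sense.
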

\begin{proof}
	First notice that if $I$ and $E$ both vanish at some point, then the solution is constant. Then, we may suppose this is not the case and differentiate the third equation \ref{eq:ODE3}, using the second \ref{eq:ODE2} to substitute for $\dot{E}$. This gives
	\begin{align*}
	\ddot{I} = a \dot{E} - (\gamma+\mu) \dot{I} = a (\beta_i I S + \beta_e E S- a E) - (\gamma+\mu) \dot{I},
	\end{align*}
	which at a critical point of $I$ yields
	$$\ddot{I} = a (\beta_i I S + \beta_e E S ) - a^2 E . $$
	As between any two maxima there must be a minimum, in order for $I$ to only have one maximum it is enough that it has no minimum. At a minimum we would have $\ddot{I} \geq 0$ which from the previous computation is ruled out if
	$$ a \beta_i I S +  ( \beta_e S  - a ) aE. $$
	However, the condition that we are at a critical point of $I(t)$, i.e. $\dot{I}(t)=0$, yields that $aE=(\gamma+\mu) I$ which upon inserting above gives $ a \beta_i S +  ( \beta_e S  - a ) (\gamma+\mu) <0 $, which in terms of $S$ reads as
	$$S<\frac{a (\gamma+\mu)}{a \beta_i + (\gamma + \mu) \beta_e}.$$
	Thus, if $S$ always satisfies this inequality past some time, then there is at most one critical point of $I$ and this must a maximum, if it exists at all. 
\end{proof}

\begin{remark}
	The previous result gives a criteria to guarantee that a second peak will not form. For that, one must try and control the $\beta_e$ and $\beta_i$ so that equation \ref{eq:Criteria_No_Second_Peak} holds.
\end{remark}

\section{A model capturing asymptomatic carriers}\label{sec:SEIAR}

In this section we analyze the model previously derived in \ref{subsubsec:Asymptomatic} which was designed to capture asymptomatic carriers. This model regards the exposed (E) in a slightly different manner than in the more refined model \ref{subsubsec:Combined}. Namely, we consider the exposed as only one state and average out their infectiousness. For convenience, we recall here the system of equations governing this model.

\begin{align}\label{eq:A1}
\dot{S} & =  - \beta_i I S - \beta_e E S - \beta_a A S \\ \label{eq:A2}
\dot{E} & = \beta_i I S + \beta_e E S + \beta_a A S -a_a E - a_i E \\ \label{eq:A3}
\dot{A} & = a_a E - \gamma_a A \\ \label{eq:A4}
\dot{I} & = a_i E - \gamma_i I - \mu I \\ \label{eq:A5}
\dot{R} & = \gamma_i I + \gamma_a A .
\end{align}

From equation \ref{eq:A1} we find that $S(t)$ is decreasing and from adding \ref{eq:A1} and \ref{eq:A2} we find that $S(t)+E(t)$ as well.

\subsection{Linear analysis}\label{ss:Linear_Second_Model}

Let us now find the critical points of this system. From equations \ref{eq:A3}, \ref{eq:A4} and \ref{eq:A5} we find that for the generic values of $a_a$, $a_i$, $\gamma_i$, $\gamma_a$, $\mu$ all $E,A,I$ must vanish at a critical point. From the remaining equations we find that $S$ and $R$ can be any arbitrary constants as long as $S+R \leq 1$. We shall write such a critical point as $c=(S,E,A,I,R)=(S_c,0,0,0,R_c)$. Notice in particular that any such critical point is disease free.\\
Then, the linearised system at such a critical point is given by
\begin{align*}
\dot{s} & =  - \beta_i S_c i - \beta_e S_c e - \beta_a S_c a \\
\dot{e} & = \beta_i S_c i + \beta_e S_c e + \beta_a S_c a - (a_i + a_a )e \\ 
\dot{a} & = a_a e - \gamma_a a \\ 
\dot{i} & = a_i e - (\gamma_i +\mu) i \\
\dot{r} & = \gamma_i i + \gamma_a a .
\end{align*}
In this case, we have that the equations for $\dot{x}$ where $x:=(e, a, i)$ do control the whole system. This subsystem is then given by $\dot{x}=Ax$ where
$$A = \begin{pmatrix}
\beta_e S_c - (a_i + a_a) & \beta_a S_c & \beta_i S_c \\
a_a & - \gamma_a & 0 \\
a_i & 0 & - (\gamma_i + \mu)
\end{pmatrix}.$$
While it is possible to compute the eigenvalues of this system in general, the formulas are extremely unwieldy and so it is hard to find a general statement which decides on the stability of the system in general. Nevertheless, as necessary condition for stability is that $\det(A)<0$ and $\tr(A)<0$ which respectively give
\begin{align}\label{eq:Necessary_Conditions}
S_c < \frac{\gamma_a (\gamma_i + \mu) (a_i + a_a)}{\gamma_a (\gamma_i + \mu) \beta_e +  \gamma_a a_i \beta_i +  (\gamma_i + \mu) a_a \beta_a}  , \ \text{and} \ 
S_c < \frac{a_i + a_a + \gamma_a + \gamma_i + \mu}{\beta_e} .
\end{align} 
If $\beta_e < a_i +a_a + \gamma_a + \gamma_i + \mu$, then the first equation above suggests that the quantity
\begin{equation}\label{eq:R0_A}
\frac{1}{a_i+a_a} \left( \beta_e + a_i \frac{\beta_i}{\gamma_i + \mu} + a_a \frac{\beta_a}{\gamma_a} \right),
\end{equation}
may behave as the basic reproduction rate for this model. Hence, and in order to have large (closed to $1$) values of the asymptotic fraction of susceptible, we would like to keep \ref{eq:R0_A} below $1$. This is then compatible with having $S_c$ close to $1$, at least from the point of view of the first equation in \ref{eq:Necessary_Conditions}.

\begin{remark}\label{conclusion:Second_Model}
	The above discussion suggests that for an outbreak to be under control we must have $\beta_e < a_i +a_a + \gamma_a + \gamma_i + \mu$ and the quantity in equation \ref{eq:R0_A} must be kept below $1$. Furthermore, from formula \ref{eq:R0_A} one can immediately see that it is enough that one of the transmission rates $\beta_e$, $\beta_i$, or $\beta_a$ to be large, for the outbreak to be out of control!
\end{remark}

\subsection{Global stability}

As in subsection \ref{subsec:Global_Stability} we will now characterize the asymptotic stability properties of the disease free equilibrium points of solutions to \ref{eq:A1}--\ref{eq:A5}. As in that subsection, it is convenient to abstract the notion of asymptotic stability for these equilibrium. 

\begin{definition}
	A solution $(S,E,A,I,R)$ to \ref{eq:A1}--\ref{eq:A5} is called asymptotically disease free if 
	$$\lim_{t \to + \infty} E(t)+A(t)+I(t) =0 .$$
\end{definition}

\begin{proposition}
	Let $(S,E,A,I,R)$ be a solution to \ref{eq:A1}--\ref{eq:A5} satisfying 
	\begin{equation}\label{eq:assumption01}
	\sup_{t \geq 0}\frac{\beta_e}{a_i+a_a}  + 	\sup_{t \geq 0}\frac{a_i}{a_i+a_a} \frac{\beta_i}{\gamma_i + \mu}   + 	\sup_{t \geq 0}\frac{a_a}{a_i+a_a} \frac{\beta_a}{\gamma_a}  < 1 .
	\end{equation}
	Then, $(S,E,A,I,R)$ is asymptotically disease free. Furthermore, all $E$, $A$ and $I$ exponentially converge to $0$.
\end{proposition}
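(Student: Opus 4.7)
The plan is to generalize the Lyapunov argument from the preceding proposition by considering the linear combination
$$L(t) = E(t) + \alpha A(t) + \beta I(t)$$
for positive constants $\alpha, \beta$ to be chosen, and aiming to establish a differential inequality $\dot{L} \leq -\delta L$ for some $\delta > 0$. Gr\"onwall's inequality then gives $L(t) \leq L(0) e^{-\delta t}$, and since each of $E, A, I$ is nonnegative, all three must decay exponentially to zero.

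Differentiating $L$ using \ref{eq:A2}, \ref{eq:A3}, \ref{eq:A4} and regrouping by compartment yields
$$\dot{L} = \big[\beta_e S - (a_i+a_a) + \alpha a_a + \beta a_i\big] E + \big[\beta_a S - \alpha \gamma_a\big] A + \big[\beta_i S - \beta(\gamma_i+\mu)\big] I.$$
Using $0 \leq S(t) \leq 1$, the coefficients of $A$ and $I$ are rendered strictly negative (and in fact bounded above uniformly by a negative multiple of $\alpha$ and $\beta$ respectively) by any choice
$$\alpha > \sup_{t \geq 0} \frac{\beta_a(t)}{\gamma_a(t)}, \qquad \beta > \sup_{t \geq 0} \frac{\beta_i(t)}{\gamma_i(t)+\mu(t)}.$$

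With such $\alpha, \beta$ in hand, dividing the bracket multiplying $E$ by $a_i + a_a$ and again invoking $S \leq 1$ produces, up to an arbitrarily small perturbation controlled by how much $\alpha, \beta$ exceed the sups above, exactly the quantity on the left-hand side of \ref{eq:assumption01} minus one. The strict inequality in the hypothesis then provides the slack needed to keep the coefficient of $E$ uniformly bounded above by a negative constant as well. Combining the three coefficient bounds yields $\dot{L} \leq -\delta L$ for a suitable $\delta > 0$, and Gr\"onwall concludes.

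The main subtlety is coordinating $\alpha, \beta, \delta$ so that all three coefficient bounds hold simultaneously and uniformly in $t$; this works cleanly under the natural (and implicit) assumption that the biological rates $a_i, a_a, \gamma_a, \gamma_i, \mu$ are bounded uniformly above and below away from zero. The strictness of the inequality in \ref{eq:assumption01} is essential, since it is precisely what lets one absorb the small perturbation introduced when enlarging $\alpha, \beta$ past their infimal admissible values.
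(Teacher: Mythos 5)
Your proof is correct and is essentially the same as the paper's: the same Lyapunov function $L = E + \alpha A + \beta I$ (the paper writes $\epsilon_a, \epsilon_i$ for your $\alpha, \beta$), the same regrouping of $\dot L$ by compartment, and the same choice of weights just above $\sup \beta_a/\gamma_a$ and $\sup \beta_i/(\gamma_i+\mu)$ so that the strict inequality in the hypothesis absorbs the slack, followed by Gr\"onwall. Your explicit remark that the rates must be bounded above and below for the three coefficient bounds to hold uniformly in $t$ is a point the paper leaves implicit.
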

\begin{proof}
	Following a similar strategy to that of subsection \ref{subsec:Global_Stability} we shall consider the function $L_{\epsilon}(t)= E(t) + \epsilon_a A(t)+ \epsilon_i I(t)$ for constants $\epsilon_a , \epsilon_i \in (0,1)$ to be fixed at a later stage. Then, it from equations \ref{eq:A2}, \ref{eq:A3} and \ref{eq:A4} follows that
	\begin{align*}
	\dot{L_\epsilon} & = (\beta_eES + \beta_a A S + \beta_i I S - (a_a+a_i) E) + \epsilon_a ( a_a E - \gamma_a A ) + \epsilon_i ( a_i E - (\gamma_i+\mu)I) \\
	& = (a_i + a_a) E \left( \frac{\beta_e}{a_i+a_a} S - \left(1- \frac{\epsilon_a  a_a + \epsilon_i a_i}{a_i+a_a}  \right) \right) + \gamma_a A \left( \frac{\beta_a}{\gamma_a} S - \epsilon_a \right)  + (\gamma_i+\mu) I \left( \frac{\beta_i}{\gamma_i +\mu} S - \epsilon_i \right) ,
	\end{align*}
	which we can rewrite as
	$$(a_i + a_a) E \left( \frac{\beta_e}{a_i+a_a} S - \left(1- \frac{\epsilon_a  a_a + \epsilon_i a_i}{a_i+a_a}  \right) \right) + \frac{\gamma_a A}{a_a} \left( \frac{a_a\beta_a}{\gamma_a} S - \epsilon_a a_a \right)  + \frac{\gamma_i+\mu}{a_i} I \left( \frac{a_i\beta_i}{\gamma_i +\mu} S - \epsilon_ia_i \right) .
	$$
	By the assumption \ref{eq:assumption01} there are $\epsilon_a, \epsilon_i \in (0,1)$ such that  
	$$	\sup_{t \geq 0} \frac{\beta_a}{\gamma_a}S < \epsilon_a , \ \ \sup_{t \geq 0} \frac{\beta_i}{\gamma_i + \mu}S < \epsilon_i $$
	and $\frac{\beta_e}{a_i+a_a} <1- \frac{\epsilon_a  a_a + \epsilon_i a_i}{a_i+a_a}$. Thus, we find that $L_{\epsilon}$ is decreasing and satisfies
	$$\dot{L_\epsilon} \leq - \delta L_{\epsilon},$$
	for some positive $\delta >0$. As a consequence of Gr\"onwall's inequality we find that $L_{\epsilon} \leq L_{\epsilon}(0) e^{-\delta t}$ and thus converges exponentially to $0$ as $t \to \infty$.
\end{proof}

\begin{remark}
	The previous proof actually shows that the same result holds under the weaker hypothesis that
	$$\sup_{t \geq 0}\frac{\beta_e(t) S(t)}{a_i+a_a}  + 	\sup_{t \geq 0}\frac{a_i}{a_i+a_a} \frac{\beta_i(t) S(t)}{\gamma_i + \mu}   + 	\sup_{t \geq 0}\frac{a_a}{a_i+a_a} \frac{\beta_a(t) S(t)}{\gamma_a}  < 1 .$$ 
\end{remark}

\subsection{Controlling an outbreak}\label{ss:Controling_Outbreak_A}

The main problem when faced with an outbreak is to keep it under control. But what does that mean in the context of the model we are considering. One possible definition is that the asymptotic value os susceptible
$$S_c := \lim_{t \to + \infty}S(t),$$
is large. Alternatively, we may attempt at keeping the total number of infected people, i.e.
$$\lim_{t \to + \infty} \int_0^tI(s) ds,$$
as small as possible. The proof of Lemma \ref{lem:E_I_to_zero} show that these two definitions are equivalent for the model \ref{eq:ODE1}--\ref{eq:ODE4}. However, in model at hand there is the possibility that as susceptible passes to the recovered state taking the asymptomatic route which makes this model fundamentally different. Indeed, from integrating the first equation \ref{eq:A1} we find that
$$S(t) = S(0) \exp \left( -\int_0^t (\beta_iI+\beta_eE+ \beta_a A) ds \right).$$
This shows that maximizing $S_c$ is equivalent to minimizing
$$\int_0^t (\beta_iI+\beta_eE+ \beta_a A) ds.$$
From the point of view of the second definition, one regards the asymptomatic to be innocuous, which seems reasonable from a purely medical point of view as they never become sick. Furthermore, it is conceivable that they can create herd immunity which shields the rest of the population. However, intuitively speaking, having a relatively large number of asymptomatic may also be bad, at least initially while $S$ is large, as they may move undetected and ``infect'' even more people. The numerical conclusion to which we arrive is that depending on the exact parameters of the model the asymptomatic can carry both roles.

\subsubsection{Examples}

Below we shall run some simulations which seem to suggest the previous intuitive reasoning to be correct in different settings. In order to this we shall fix the parameters $\beta_i,\beta_e,\beta_a,\gamma_i,\gamma_a,\mu$ and vary $a_i$ and $a_a$. The examples in this section have been constructed in order to provide answers to the question \ref{que:3} raised in the Introduction.

\begin{remark}
	A few words must be said about the choice of the parameters $\beta_i,\beta_e,\beta_a,a_i,a_a,\gamma_i,\gamma_a,\mu$ ruling the model. It is at the date medically impossible to determine the values of all these and so it is meaningless to assign them specific values and pretend we are modeling the true outbreak. Nevertheless, I must justify the values I will be assigning in my model. I shall suppose that 
	$$\beta_i < \beta_e < \beta_a,$$ 
	and recall that these encode the number of infections that the infected, exposed and asymptomatic cause by unit time. This choice may seem strange but there is a reason for this choice. I assume that the infected, while possibly being extremely infectious, are dealt with carefully and so do not infect as many other people by unit time as the exposed and asymptomatic. This justifies $\beta_i < \beta_a$ and also $\beta_i < \beta_e$. The exposed, even though not having been detected yet, are not as infectious as they will eventually become later thus $\beta_e < \beta_a$. Notice that we do not have $\beta_e<\beta_i$ as the infected individuals will be detected and dealt with in a way that avoids further infections. Without this extra care that inequality would hold. From all this we then have:
	$\beta_i < \beta_e < \beta_a$, as claimed.\\
	Next, we shall assume in our simulations that 
	$$\gamma_i< \gamma_a$$ 
	which means that the average time $\gamma_a^{-1}$ an asymptomatic individual takes to reduce its viral charge to zero is smaller than the time an infected indivual needs $\gamma_i^{-1}$. This is highly debatable and, even though seemingly reasonable, I have no way to better justify this choice.
\end{remark}

The simulation in the next example \ref{ex:Realistic} illustrates a case where having a larger fraction of asymptomatics carriers did not make the disease develop faster and actually lead to a smaller peaks of infection. This need not always be the case as we shall later see in example \ref{ex:Realistic2}.

\begin{example}\label{ex:Realistic}
	In all the simulations we shall run in this example we shall have $\beta_i=0.9$, $\beta_e=1.5$ and $\beta_a=2.8$ while $\gamma_i=0.8$, $\gamma_a=1.2$ and $\mu=0.01$.\\
	In Figure \ref{fig:Comparison_Same_Sum} we shall plot the number of infected people, these are those which actually get sick, as a function of time where we compare different simulations obtained with the same value of $a_i+a_a=1$ but different values of $a_i$ and $a_a$ individually.
	\begin{figure}[h]
		\centering
		\includegraphics[width=0.5\textwidth,height=0.35\textheight]{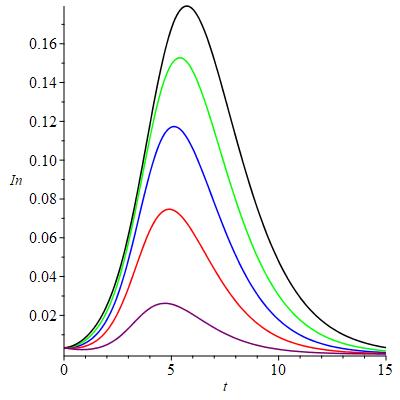}
		\caption{In black, green, blue, red and purple the values of $I(t)$ are plotted for the simulations obtained from $a_a$ being $0.1$, $0.3$, $0.5$, $0.7$ and $0.9$ respectively.}
		\label{fig:Comparison_Same_Sum}
	\end{figure}
	Indeed, from this simulation we can infer that the smaller $a_a$ is, or equivalently the larger $a_i$ is, the further high and late the peak is. Intuitively we can justify this from the fact that a large $a_a$ gives a lot of asymptomatic carriers, which when recovered create a group immunity shielding the rest of the population. 
\end{example}

I believe the values of the transmission rates $\beta_e$, $\beta_i$, $\beta_a$ in the previous example \ref{ex:Realistic} are reasonable and somewhat realistic. Nevertheless, in order to better understand the role that asymptomatic carriers can carry it is actually more convenient to slightly exaggerate the values by having $\beta_a \gg \beta_e$. We shall do so in the next example which illustrates one other role that asymptomatic carries can have. Namely, that in an early stage, i.e. when $S$ is still large, having more asymptomatic carriers can cause a faster increase of the infected population anticipating and increasing the peak.

\begin{example}\label{ex:Realistic2}
	In this example we use the same constants of the previous example except for the values of $\beta_i$ and $\beta_a$ which we shall set to be $\beta_i=0.5$ and $\beta_a=7$. As explained before, the fact that $\beta_a$ is so big when compared to $\beta_i$ means not only that the disease is extremely contagious but also that those individuals which are showing symptoms are isolated and handled extremely carefully while the asymptotic ones are not. In figure \ref{fig:Comparison_Same_Sum2} we have plotted in black and green the simulation obtained when $a_a=0.1$ (and $a_i=0.9$) and $a_a=0.2$ (and $a_i=0.8$) respectively. 
	\begin{figure}[h]
		\centering
		\includegraphics[width=0.5\textwidth,height=0.35\textheight]{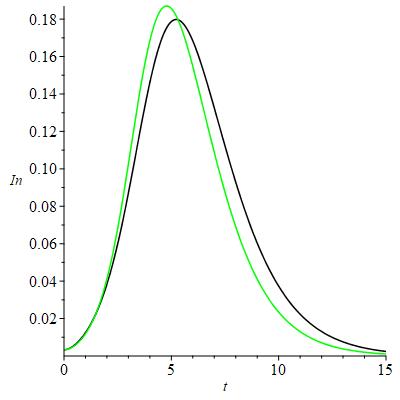}
		\caption{In black and green are the plots of $I(t)$ for the simulations obtained from $a_a$ being $0.1$ and $0.2$ respectively.}
		\label{fig:Comparison_Same_Sum2}
	\end{figure}
	In this very particular example, we see that the higher probability of an exposed person to become asymptomatic leads to higher degree of contagion which itself leads to a higher earlier peak of the infected population. It is natural to inquire about the actual population of asymptomatic carriers in both of these simulations, these are plotted in figure \ref{fig:Comparison_A}.
	\begin{figure}[h]
		\centering
		\includegraphics[width=0.5\textwidth,height=0.35\textheight]{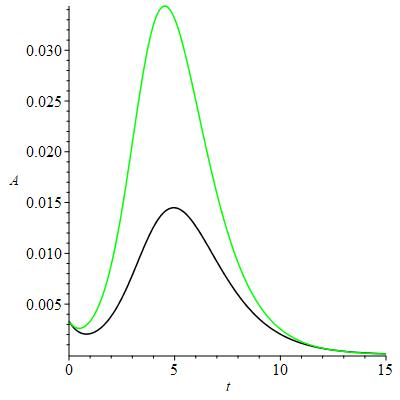}
		\caption{In black and green are the plots of $A(t)$ for $a_a$ being $0.1$, $0.2$ respectively.}
		\label{fig:Comparison_A}
	\end{figure}
\end{example}

\subsubsection{Second waves}

In the simpler model given by the system \ref{eq:ODE1}--\ref{eq:ODE4} we saw in Example \ref{ex:Second_Peak_Easy_Model} that a second wave can form. In the same manner as in that case, we expect this will happen when the quantity
$$\frac{1}{a_i+a_a} \left( \beta_e + a_i \frac{\beta_i}{\gamma_i + \mu} + a_a \frac{\beta_a}{\gamma_a} \right) ,$$
derived in subsection \ref{ss:Linear_Second_Model}, rises above $1$. We shall now see an example of a solution to the model \ref{eq:A1}--\ref{eq:A5} for which a second wave forms. In fact, we shall see that the second peak can be larger than the first and thus this example will serve as an answer to the items (b) and (c) of question \ref{que:2} using this more refined model. Also, we shall use this opportunity to further elaborate on the role of asymptomatic carriers giving further input towards question \ref{que:3} in the Introduction.

\begin{example}\label{ex:Second_Wave_Second_Model}
	Let $\beta_i(t)=0.9f(t)$, $\beta_e(t)=1.5f(t)$, $\beta_a(t)=2.8f(t)$, $a_a=0.2$ and $a_i=0.8$, $\gamma_i=0.8$, $\gamma_a=1.2$ and $\mu=0.01$, for some function $f(t)$ which was carefully designed\footnote{We are using $f(t)=\tfrac{1}{10}+S_{37}(2-t)+S_{37}(t-7)$ where $S_{37}(t)$ is the sum of the first 37 terms of the Fourier series of the Heaviside function on $[-20,20]$.} but whiose specific form is unimportant for now. As a matter of showing the importance of the quantity \ref{eq:R0_A} we shall plot this in figure \ref{fig:R0} which we can see has two large plateaus, well above $1$, connected by a region where it is substantially smaller. The simulation run for these values is presented in figure \ref{fig:Second_Peak_2} and is made with the initial conditions $S(0)=0.99$, $E(0)=0.01$, $I(0)=0=R(0)$.\\ 
	One sees that the first peak is too small for a sufficient fraction of the population to acquire immunity and the high increase of the rate of transmissions, here enconded in the quantity \ref{eq:R0_A} leads to a second peak of the outbreak.
	\begin{figure}[h]
		\centering
		\includegraphics[width=0.5\textwidth,height=0.35\textheight]{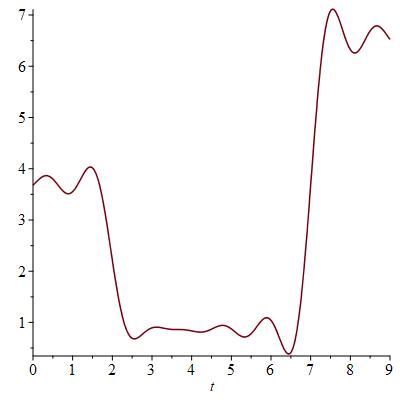}
		\caption{Plot of $\frac{1}{a_i+a_a} \left( \beta_e(t) + a_i \frac{\beta_i(t)}{\gamma_i + \mu} + a_a \frac{\beta_a(t)}{\gamma_a} \right)$.}
		\label{fig:R0}
	\end{figure}
	Indeed, this second wave of the outbreak creates an even larger peak than the first.
	\begin{figure}[h]
		\centering
		\includegraphics[width=0.5\textwidth,height=0.35\textheight]{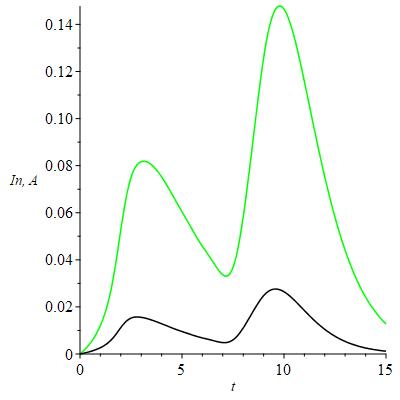}
		\caption{Plot of $I(t)$ and $A(t)$ in green and black. These represent the infected and the asymptomatic carriers respectively.}
		\label{fig:Second_Peak_2}
	\end{figure}
	Of course, the actual values of the transmission rates we used in this example may not be well adapted for a realistic modeling.\footnote{Even though, I have tried to make them as reasonable as I could.} Nevertheless, this serves as an example to stress the possibility of having a second outbreak that may be worse than the first.\\ 
	As a way of comparison, and to further investigate on the role of asymptomatic carriers, we present in figure \ref{fig:Second_Peak_2B} a simulation of the same model except that we now have a much higher fraction of asymptomatic carriers, in this case $a_a=0.8$ and $a_i=0.2$. This means that each individual has a $80$ per cent probability of becoming asymptomatic when exposed to the disease.
	\begin{figure}[h]
		\centering
		\includegraphics[width=0.5\textwidth,height=0.35\textheight]{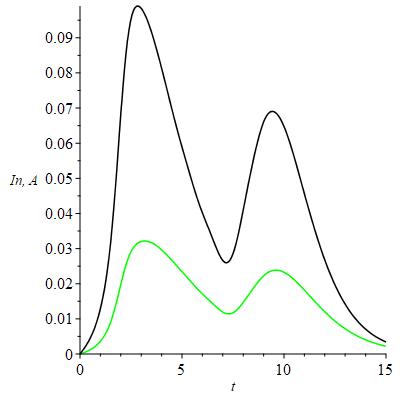}
		\caption{Plot of $I(t)$ and $A(t)$ in green and black respectively.}
		\label{fig:Second_Peak_2B}
	\end{figure}
	In this case, we see that the large quantity of asymptomatic carriers keeps the peaks of infection much smaller. In fact, we also find that in this case the second peak is smaller. This can be interpreted in terms of the herd immunity acquired by the asymptomatic carriers which shields the rest of population while keeping the number of infections smaller.
\end{example}

\begin{remark}
	Example \ref{ex:Second_Wave_Second_Model} raises the question of whether it is possibly to find conditions under which an outbreak will develop a unique peak. This is precisely the constant of item (a) in question \ref{que:2}. As we shall see in Corollary \ref{cor:One_Peak} of the next section such a criteria can be found and cast in terms of the quantity \ref{eq:R0_A}.
\end{remark}

\subsection{Precise qualitative results}

As in Lemma \ref{lem:E_I_to_zero} we find that in order for
$$S_c := \lim_{t \to + \infty}S(t),$$
to be positive we must have all $A(t)$, $I(t)$ and $E(t)$ converging to zero as $t \to + \infty$. The other natural easy question to investigate is that of finding criteria for which there will be only one peak of the outbreak, very much as done in Proposition \ref{lem:Only_One_Maximum} for the simpler model \ref{eq:ODE1}--\ref{eq:ODE4}. Here, we shall find an analogue of that which holds under more restrictive hypothesis.

\begin{proposition}\label{prop:One_Peak}
	Let $(S,E,A,I,R)$ be a solution to \ref{eq:A1}--\ref{eq:A5} with $a_i,a_a, \gamma_i, \mu$ constant and $S_c>0$. Then, there is a constant $c>0$ so that $A \leq c \frac{\gamma_i+\mu}{\gamma_a} \frac{a_a}{a_i} I$ for all $t \geq 0$. Furthermore, if
	$$\sup_{t \geq 0} \frac{1}{a_i +a_a}\left( \frac{a_i}{\gamma_i + \mu} \beta_i + \beta_e  +  c \frac{a_a}{\gamma_a}  \beta_a \right) \leq 1,$$
	the function $I(t)$ has at most one critical point and this is a maximum if it exists.
\end{proposition}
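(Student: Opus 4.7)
My plan is to split the proof along the two assertions. The comparison bound in the first assertion is the step I expect to be the delicate one; the second assertion will then follow from a direct calculation closely modeled on the proof of Proposition~\ref{lem:Only_One_Maximum}.

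For the first assertion, I would exploit that both $A$ and $I$ satisfy linear first-order ODEs driven by $E$, namely
\[
\dot A + \gamma_a A = a_a E, \qquad \dot I + (\gamma_i+\mu) I = a_i E,
\]
and that $K:=\frac{(\gamma_i+\mu)a_a}{\gamma_a a_i}$ is precisely the quasi-equilibrium value of $A/I$ obtained by annihilating both time-derivatives at a common value of $E$. The assumption $S_c>0$ forces, as in Lemma~\ref{lem:E_I_to_zero}, that $E,A,I\to 0$ as $t\to\infty$, and in fact $\int_0^\infty(\beta_iI+\beta_eE+\beta_aA)\,ds<\infty$. To control $\rho:=A/I$ uniformly, I would compute
\[
\dot\rho \;=\; \frac{E}{I}\,(a_a-a_i\rho) \;+\; (\gamma_i+\mu-\gamma_a)\,\rho,
\]
so that whenever $\rho$ overshoots $a_a/a_i$ the first term acts as a restoring force. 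Combining this with the linearization around $(S_c,0,0,0,R_c)$, which shows that $E,A,I$ decay at a common exponential rate and consequently that $\rho$ has a finite limit at infinity, one obtains a uniform bound $A\leq cKI$ on $[0,\infty)$ by continuity. The main obstacle I anticipate is to exclude that $\rho$ spikes to infinity at an intermediate time (for instance when $\gamma_a<\gamma_i+\mu$); this should be handled by a differential-inequality/comparison argument for the quantity $u:=a_iA-a_aI$, which satisfies $\dot u+\gamma_a u=a_a(\gamma_i+\mu-\gamma_a)I$ and can be integrated against the decaying driver $I$.

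For the second assertion, let $t_\ast$ be a critical point of $I$. Equation \ref{eq:A4} gives $a_i E(t_\ast)=(\gamma_i+\mu)I(t_\ast)$, hence $E(t_\ast)=\tfrac{\gamma_i+\mu}{a_i}I(t_\ast)$. Differentiating \ref{eq:A4} once more and substituting \ref{eq:A2} yields, at $t_\ast$,
\[
\ddot I(t_\ast) \;=\; a_i\dot E(t_\ast) \;=\; a_i\Big(S(\beta_iI+\beta_eE+\beta_aA)-(a_i+a_a)E\Big)\Big|_{t=t_\ast}.
\]
Inserting the expression for $E$ and the bound $A\leq cKI$ from the first step and factoring out $(\gamma_i+\mu)(a_i+a_a)I(t_\ast)$ rearranges this into
\[
\ddot I(t_\ast) \;\leq\; (\gamma_i+\mu)(a_i+a_a)\,I(t_\ast)\left[\frac{S(t_\ast)}{a_i+a_a}\!\left(\frac{a_i\beta_i}{\gamma_i+\mu}+\beta_e+\frac{c\,a_a\beta_a}{\gamma_a}\right)-1\right].
\]
By the standing hypothesis the bracketed sum is bounded by $S(t_\ast)-1\leq 0$, strictly negative once $S(t_\ast)<1$, which happens for any $t_\ast>0$ by the nonconstancy of the solution as soon as any infection has appeared. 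Thus $\ddot I(t_\ast)<0$ at any critical point, ruling out strict local minima of $I$. Since between two distinct local maxima there must lie an interior local minimum, $I$ admits at most one critical point, necessarily a maximum if it exists.
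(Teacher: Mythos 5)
Your treatment of the second assertion is correct and is essentially the paper's own argument: differentiate \ref{eq:A4}, substitute \ref{eq:A2}, use $a_iE=(\gamma_i+\mu)I$ at a critical point, insert the bound on $A$, and conclude $\ddot I<0$ so that $I$ has no interior local minimum and hence at most one critical point. Your bookkeeping of the prefactor $(\gamma_i+\mu)(a_i+a_a)I(t_\ast)$ and your explicit handling of the strictness issue (the hypothesis only gives the bracket $\leq S-1$, so one needs $S(t_\ast)<1$) are, if anything, slightly more careful than the paper's ``as $S\leq 1$ this is immediate.''

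The genuine gap is in the first assertion, and it is present both in your proposal and in the paper. The paper disposes of it in one line (``as $S_c>0$, both $A$ and $I$ converge to zero and so there is a constant $c$''), which is not a proof: decay of $A$ and $I$ separately says nothing about the ratio $A/I$, and the claim as literally stated fails if, for instance, $I(0)=0<A(0)$, since then no constant $c$ works at $t=0$. You correctly identify this as the delicate point, and your computations are right --- $\dot\rho=\tfrac{E}{I}(a_a-a_i\rho)+(\gamma_i+\mu-\gamma_a)\rho$ for $\rho=A/I$, and $\dot u+\gamma_a u=a_a(\gamma_i+\mu-\gamma_a)I$ for $u=a_iA-a_aI$ --- but the argument does not close: the restoring-force mechanism controls $\rho$ only when $\gamma_a\geq\gamma_i+\mu$ and $\rho(0)\leq a_a/a_i$, and the Duhamel formula for $u$ bounds $u(t)$ by $\sup I$, which cannot be converted into $u\leq CI(t)$ without a lower bound on $I(t)$. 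The linearization at the equilibrium likewise gives no uniform control at intermediate times. This is exactly why the paper retreats, in Lemma \ref{lem:A_I} and Corollary \ref{cor:One_Peak}, to the additional hypotheses $a_iA(0)\leq a_aI(0)$ and $\gamma_a\geq\gamma_i+\mu$, under which your own $\dot\rho$ equation (or the paper's integration of the two driven linear ODEs) immediately yields $A\leq\tfrac{a_a}{a_i}I$ for all $t$, i.e.\ the stated bound with $c=\gamma_a/(\gamma_i+\mu)$. You should state and prove the ratio bound in that conditional form rather than attempt the unconditional version, which appears to be false as stated.
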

\begin{proof}
	First notice that as $S_c>0$ we have that both $A$ and $I$ converge to zero and so there is a constant $c$ as in the statement.
	Thus, in order to prove the result we shall follow the strategy of Proposition \ref{lem:Only_One_Maximum}. This consists in finding conditions which guarantee that $\ddot{I}<0$ at any critical point of $I$. If that is the case, then $I$ can have no minimum and thus will have at most one maximum, as in-between any two maxima there must be a minima. Thus, we start by computing
	\begin{align*}
	\ddot{I} & = a_i \dot{E} - (\gamma_i + \mu) \dot{I} \\
	& = a_i \left( \left( \beta_i I+\beta_e E+ \beta_aA \right)S - (a_i +a_a) E \right) - (\gamma_i + \mu) \dot{I} ,
	\end{align*}
	which at a critical point of $I$ further satisfies $\dot{I}=0$, i.e. $a_iE=(\gamma_i + \mu) I$, and so we can rewrite
	\begin{align*}
	\ddot{I} & = a_i \left( \left( \beta_i I+\frac{\beta_e}{a_i} (\gamma_i+\mu) I + \beta_aA \right)S - \frac{a_i +a_a}{a_i} (\gamma_i + \mu) I \right).
	\end{align*}
	Thus, we want to find conditions under which the following inequality holds
	\begin{equation}\label{eq:Inequality_Intermediate}
	\left( \beta_i +\frac{\beta_e}{a_i} (\gamma_i+\mu)  + \beta_a \frac{A}{I} \right)S - \frac{a_i +a_a}{a_i} (\gamma_i + \mu)  <0 .
	\end{equation}
	If we further have $A \leq c \frac{\gamma_i+\mu}{\gamma_a} \frac{a_a}{a_i} I$, then for inequality \ref{eq:Inequality_Intermediate} to hold, it is enough that
	$$
	\left( \beta_i +\frac{\beta_e}{a_i} (\gamma_i+\mu)  +  c \frac{\gamma_i+\mu}{\gamma_a} \frac{a_a}{a_i} \beta_a \right)S - \frac{a_i +a_a}{a_i} (\gamma_i + \mu)  <0 ,
	$$
	which we can rewrite as
	$$
	\frac{1}{a_i +a_a}\left( \frac{a_i}{\gamma_i + \mu} \beta_i + \beta_e  +  c \frac{a_a}{\gamma_a}  \beta_a \right)S  < 1.
	$$
	As $S \leq 1$ we find that this condition is immediate from that in the statement.
\end{proof}

\begin{remark}\label{rem:Only_One_Peak_A}
	From the proof of Proposition \ref{prop:One_Peak} we find that, in fact, for $I(t)$ to only have at most one maximum it is enough to show that
	$$ \sup_{t \geq 0} \frac{1}{a_i +a_a}\left( a_i\frac{\beta_i}{\gamma_i + \mu} \beta_i + \beta_e  + a_a \frac{\beta_a}{\gamma_i + \mu} \left( \frac{a_iA}{a_aI} \right) \right)  < 1 ,$$
	or even
	$$\sup_{t \geq 0}\frac{1}{a_i +a_a}\left( a_i\frac{\beta_i}{\gamma_i + \mu} \beta_i + \beta_e  + a_a \frac{\beta_a}{\gamma_i + \mu} \left( \frac{a_iA}{a_aI} \right) \right) S  < 1.$$
	Of course, one would like to keep $S(t)$ as close to one as possible and so the first of this seems to be a more reasonable test. Furthermore, it does not require to know the value of $S(t)$ but that of the ratio $A(t)/I(t)$ which may be obtained by sampling with anti-body tests for example. 
\end{remark}

Of course, the previous proposition would be much more useful if for any given solution $(S,E,A,I,R)$ one can determine the value of $c$.

\begin{lemma}\label{lem:A_I}
	Let $(S,E,A,I,R)$ be a solution to \ref{eq:A1}--\ref{eq:A5}. Then, for all $t \geq 0$ we have
	$$\frac{e^{-\gamma_a t}}{a_a}\frac{d}{dt} \left( e^{\gamma_a t} A \right) = \frac{e^{-(\gamma_i+\mu)t}}{a_i}\frac{d}{dt} \left( e^{(\gamma_i + \mu) t} I \right)$$
	Furthermore, let $A(0)$, $I(0)$ respectively denote the initial value of asymptomatic and infected individuals. If $a_iA(0)\leq a_aI(0)$ and $\gamma_a \geq \gamma_i + \mu$, then
	$$A(t) \leq \frac{a_a}{a_i}  I(t).$$
\end{lemma}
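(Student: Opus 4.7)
The plan is to first establish the identity as a trivial consequence of equations \ref{eq:A3} and \ref{eq:A4}, and then to integrate each in Duhamel form and compare $A(t)$ with $(a_a/a_i)\,I(t)$ term by term. Rewriting \ref{eq:A3} and \ref{eq:A4} with the appropriate integrating factors gives
$$\frac{d}{dt}\left(e^{\gamma_a t} A\right) = a_a \, e^{\gamma_a t} E, \qquad \frac{d}{dt}\left(e^{(\gamma_i+\mu)t} I\right) = a_i \, e^{(\gamma_i+\mu)t} E.$$
Dividing the first by $a_a \, e^{\gamma_a t}$ and the second by $a_i \, e^{(\gamma_i+\mu)t}$, both sides equal $E(t)$, which is exactly the claimed identity.

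For the inequality, I would integrate these same two equations from $0$ to $t$ to obtain the Duhamel-type representations
$$A(t) = e^{-\gamma_a t} A(0) + a_a \int_0^t e^{-\gamma_a (t-s)} E(s)\, ds,$$
$$I(t) = e^{-(\gamma_i+\mu) t} I(0) + a_i \int_0^t e^{-(\gamma_i+\mu)(t-s)} E(s)\, ds.$$
Rearranging the desired bound $A(t) \le (a_a/a_i)\,I(t)$ in terms of these two formulas reduces it to
$$e^{-\gamma_a t} A(0) - \frac{a_a}{a_i} e^{-(\gamma_i+\mu)t} I(0) \;\leq\; a_a \int_0^t \left( e^{-(\gamma_i+\mu)(t-s)} - e^{-\gamma_a(t-s)} \right) E(s)\, ds.$$
Since the system preserves non-negativity of $E$, and since the hypothesis $\gamma_a \ge \gamma_i + \mu$ makes the kernel difference $e^{-(\gamma_i+\mu)(t-s)} - e^{-\gamma_a(t-s)}$ non-negative, the right-hand side is non-negative. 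The same hypothesis $\gamma_a \ge \gamma_i+\mu$ also gives $e^{-\gamma_a t} \le e^{-(\gamma_i+\mu)t}$, and combined with $a_i A(0) \le a_a I(0)$ the left-hand side is bounded above by $e^{-(\gamma_i+\mu)t}\left(A(0) - (a_a/a_i) I(0)\right) \le 0$, closing the argument.

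The whole proof reduces to variation of parameters and a termwise comparison, so I would not expect a real technical obstacle. The only point to notice is that the two hypotheses handle logically independent pieces: $\gamma_a \ge \gamma_i + \mu$ is used both to control the decay of the initial datum $A(0)$ and to give the correct sign to the convolution kernel, while $a_i A(0) \le a_a I(0)$ enters solely in the comparison of the boundary terms.
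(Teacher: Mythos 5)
Your proof is correct and follows essentially the same route as the paper: both rest on the integrating-factor identities $\tfrac{d}{dt}\bigl(e^{\gamma_a t}A\bigr)=a_a e^{\gamma_a t}E$ and $\tfrac{d}{dt}\bigl(e^{(\gamma_i+\mu)t}I\bigr)=a_i e^{(\gamma_i+\mu)t}E$. The only (cosmetic) difference is that the paper converts the identity into the pointwise differential inequality $\tfrac{1}{a_a}\tfrac{d}{dt}\bigl(e^{\gamma_a t}A\bigr)\le \tfrac{1}{a_i}\tfrac{d}{dt}\bigl(e^{\gamma_a t}I\bigr)$ (using $I\ge 0$ and $\gamma_a\ge\gamma_i+\mu$) and integrates once, whereas you integrate first to Duhamel form and compare the convolution kernels (using $E\ge 0$), with both arguments invoking $a_iA(0)\le a_aI(0)$ only for the boundary terms.
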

\begin{proof}
	Equations \ref{eq:A3}, \ref{eq:A4} can be equivalently rewritten as
	\begin{align*}
	\frac{d}{dt} \left( e^{\gamma_a t} A \right) & = a_a e^{\gamma_a t} E \\
	\frac{d}{dt} \left( e^{(\gamma_i + \mu) t} I \right) & = a_i e^{(\gamma_i+\mu) t} E ,
	\end{align*}
	from which we infer that
	\begin{align*}
	\frac{e^{-\gamma_a t}}{a_a}\frac{d}{dt} \left( e^{\gamma_a t} A \right) & = \frac{e^{-(\gamma_i+\mu)t}}{a_i}\frac{d}{dt} \left( e^{(\gamma_i + \mu) t} I \right) ,
	\end{align*}
	and using the hypothesis that $\gamma_a \geq \gamma_i + \mu$
	\begin{align*}
	\frac{1}{a_a}\frac{d}{dt} \left( e^{\gamma_a t} A \right) & = \frac{e^{(\gamma_a-(\gamma_i+\mu))t}}{a_i}\frac{d}{dt} \left( e^{\gamma_a t} I \right) \\
	& = \frac{1}{a_i}\frac{d}{dt} \left( e^{\gamma_a t} I \right) - \frac{\gamma_a-(\gamma_i+\mu)}{a_i} e^{\gamma_a t} I \\
	& \leq  \frac{1}{a_i}\frac{d}{dt} \left( e^{\gamma_a t} I \right) .
	\end{align*}
	Integrating this yields
	\begin{align*}
	\frac{e^{\gamma_a t} A(t)}{a_a} - \frac{A(0)}{a_a} \leq  \frac{e^{\gamma_a t} I(t)}{a_i} - \frac{I(0)}{a_i},
	\end{align*}
	which upon rearranging gives the result in the statement.
\end{proof}

Putting this Lemma \ref{lem:A_I} together with Proposition \ref{prop:One_Peak} we find the following Corollary. This serves as a rigorous answer to the inquire raised in item (a) of question \ref{que:2}.

\begin{corollary}\label{cor:One_Peak}
	Let $(S,E,A,I,R)$ be a solution to \ref{eq:A1}--\ref{eq:A5} with $a_i,a_a, \gamma_i, \mu$ constant and $S_c>0$. Suppose that $a_iA(0)\leq a_aI(0)$, $\gamma_a\geq \gamma_i + \mu$ and
	\begin{equation}\label{eq:Cor}
	\sup_{t \geq 0} \frac{1}{a_i +a_a}\left( \beta_e + \frac{a_i\beta_i + a_a \beta_a}{\gamma_i + \mu}   \right) < 1.
	\end{equation}
	Then, the function $I(t)$ has at most one critical point and this is a maximum if it exists.
\end{corollary}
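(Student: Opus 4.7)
The plan is to combine Lemma \ref{lem:A_I} with Proposition \ref{prop:One_Peak} directly, since the hypotheses of the Corollary are precisely what is needed to feed into both results. The structure is: Lemma \ref{lem:A_I} supplies an explicit comparison between $A(t)$ and $I(t)$, and Proposition \ref{prop:One_Peak} turns that comparison into a single-peak criterion.

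First I would invoke Lemma \ref{lem:A_I}. The assumptions $a_i A(0) \leq a_a I(0)$ and $\gamma_a \geq \gamma_i + \mu$ are exactly those needed, so the lemma yields
$$A(t) \leq \frac{a_a}{a_i} I(t) \qquad \text{for all } t \geq 0.$$
I then rewrite this bound in the form demanded by Proposition \ref{prop:One_Peak}, namely $A(t) \leq c \, \frac{\gamma_i+\mu}{\gamma_a}\,\frac{a_a}{a_i}\, I(t)$, by taking $c = \frac{\gamma_a}{\gamma_i+\mu}$. Note that this is a legitimate choice: since $S_c > 0$, Lemma \ref{lem:E_I_to_zero}'s analogue (obtained in the same way for the present system from equation \ref{eq:A1}) ensures $A$ and $I$ both tend to $0$, so the existence of such a constant $c$ is consistent with the setup of Proposition \ref{prop:One_Peak}; here the lemma just makes it explicit.

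Next I would substitute this value of $c$ into the hypothesis of Proposition \ref{prop:One_Peak}:
$$\sup_{t \geq 0} \frac{1}{a_i+a_a}\left(\frac{a_i}{\gamma_i+\mu}\beta_i + \beta_e + c\,\frac{a_a}{\gamma_a}\beta_a\right) \leq 1.$$
With $c = \frac{\gamma_a}{\gamma_i+\mu}$ the factor $c \cdot \frac{a_a}{\gamma_a}$ simplifies to $\frac{a_a}{\gamma_i+\mu}$, and the condition becomes
$$\sup_{t \geq 0} \frac{1}{a_i+a_a}\left(\beta_e + \frac{a_i\beta_i + a_a\beta_a}{\gamma_i+\mu}\right) \leq 1,$$
which is exactly \ref{eq:Cor} (indeed it is even slightly weaker, as \ref{eq:Cor} is a strict inequality). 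Applying Proposition \ref{prop:One_Peak} then gives the conclusion that $I(t)$ has at most one critical point, and that any such point is a maximum.

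There is no real obstacle here: the Corollary is a clean packaging of the two preceding results. The only mildly subtle point is verifying that the constant $c$ from the pointwise bound in Lemma \ref{lem:A_I} can indeed be used in the role demanded by Proposition \ref{prop:One_Peak} (as opposed to an asymptotic constant that only holds for large $t$), but Lemma \ref{lem:A_I} gives the inequality for \emph{all} $t \geq 0$, so the argument in the proof of Proposition \ref{prop:One_Peak}---which requires $\ddot{I}<0$ at every critical point of $I$---goes through unchanged.
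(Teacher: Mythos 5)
Your proof is correct and is exactly the paper's argument: the paper states the corollary as the direct combination of Lemma \ref{lem:A_I} and Proposition \ref{prop:One_Peak}, and your identification of the constant $c=\gamma_a/(\gamma_i+\mu)$, which turns the proposition's hypothesis into \ref{eq:Cor}, is the right (and only) bookkeeping step.
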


\begin{remark}\label{rem:After_Cor}
	In the previous Corollary \ref{cor:One_Peak}, the hypothesis that $\gamma_a>\gamma_i + \mu$ holds immediately if the average time for an asymptomatic carrier to stop being contageous is smaller than that required by the average infected individual. Still under this hypothesis, if the condition in equation \ref{eq:Cor} holds, then we also have
	$$\sup_{t \geq 0} \frac{1}{a_i +a_a}\left( \beta_e + \frac{a_i\beta_i}{\gamma_i + \mu}  + \frac{a_a \beta_a}{\gamma_a} \right) < 1,$$
	which should be compared with Conclusion \ref{conclusion:Second_Model}.
\end{remark}

\section{Grouping the population}\label{sec:Groups}

Recall that the idea to be implemented and tested is that of splitting the population into $n$ groups which, to first approximation, do not interact. This obviously supposes that in each household there are only people of the same group as otherwise it would be impossible to guarantee that they do not mix.\\

We shall simply describe how the situation works by using the simplest possible model, namely the modified SEIR model described by equations \ref{eq:ODE1}--\ref{eq:ODE4}. For the SEIAR model the situation is similar and the argument follows exactly the same lines.\\
We start by splitting the fraction of susceptible population as $S=\sum_{k=1}^nS_k$ with similar splits for $E$, $I$ and $R$. Suppose that people from each group only interact with those of their group, i.e. that for all $i , j \in \lbrace 1, \ldots , k \rbrace$ and $i \neq j$, no person from group $i$ meets a person from group $j$. Then, each $k \in \lbrace 1 , \ldots , k \rbrace$ we have that $(S_k,E_k,I_k,R_k)$ solves

\begin{align*}
\dot{S}_k & =  - \beta_i I_k S_k - \beta_e E_k S_k  \\
\dot{E}_k & = \beta_i I_k S_k + \beta_e E_k S_k - a E_k \\
\dot{I}_k & = a E_k - \gamma I_k - \mu I_k \\
\dot{R}_k & = \gamma I_k .
\end{align*}
 
If we suppose that all populations follow the same dynamics with the same initial data then $S_1 = \ldots = S_n $ and so $S_k=S/n$ for each $k=1, \ldots , n$, and similarly $E_k=E/n$, $I_k=I/n$, $R_k=R/n$. Then, the system above turns into

\begin{align*}
\dot{S} & =  - \frac{\beta_i}{n} I S - \frac{\beta_e}{n} E S  \\
\dot{E} & = \frac{\beta_i}{n} I S  + \frac{\beta_e}{n} E S - a E \\
\dot{I} & = a E - \gamma I - \mu I \\
\dot{R} & = \gamma I ,
\end{align*}

This is the same system as that we started with but with the transmission rates $\beta$'s replaced by $\beta/n$. This cuts the rate of propagation by dividing it by the number of groups in which the population was split. 

\begin{remark}
	The exact same reasoning leads to a similar conclusion using the model \ref{eq:A1}--\ref{eq:A5}. 
\end{remark}

Building on our Conclusion \ref{conclusion:First_Model} we are lead to the following:

\begin{conclusion}
	It is possible to reduce the rates of transmission in half, or third, ou a fourth, and so on... Then, reducing it by a large enough factor, it is one can keep a large fraction of the population $S_c<S(0) <1$ still susceptible (without ever having contracted the disease). One way to achieve thus is to do divide the population into $n$ groups which one supposes not to physically interact with one another. Then, one keeps $n$ large enough so that
	$$\frac{1}{n} \frac{\beta_e}{\gamma + \mu}  < 1 .$$
	and
	$$\frac{1}{n} \left( \frac{\beta_i}{\gamma+\mu} + \frac{\beta_e}{a} \right) < 1.$$
	In fact, we should actually require that $\frac{\beta_e}{n}  \ll \gamma + \mu$ and $\frac{1}{n} \left( \frac{\beta_i}{\gamma+\mu} + \frac{\beta_e}{a} \right) \ll 1$, so that the final fraction of the population which is still susceptible is as high as possible.\\
	Using instead the model \ref{eq:A1}--\ref{eq:A5} and having in mind Corollary \ref{cor:One_Peak} and Remark \ref{rem:After_Cor}, if $\gamma_a \geq \gamma_i + \mu$ and $a_iA(0)\leq a_aI(0)$, then requiring that
	$$\sup_{t \geq 0}\frac{1}{a_i+a_a} \left( \beta_e + a_i \frac{\beta_i}{\gamma_i + \mu} + a_a \frac{\beta_a}{\gamma_a} \right) <1 ,$$
	guarantees that only one peak will form. Alternatively, if the ratio $A/I$ is possible to compute,\footnote{by using anti-body testing for instance.} then one may instead control the quantities put forward in remark \ref{rem:Only_One_Peak_A}.
\end{conclusion}


\begin{example}\label{ex:Second}
	Consider the same numerical values as those we previously considered in example \ref{ex:First} and figure \ref{fig:First}. Recall that these are $\beta_i=0.9$, $\beta_e=2.5$, $a=1$, $\gamma=0.9$, $\mu=0.01$ and initial conditions $S(0)=0.99$, $I(0)=0$, $E(0)=0.01$, $R(0)=0$. Also, the curves with color red, blue, green and purple respectively denote the susceptible, exposed, infected and recovered. In figure \ref{fig:First2} the case when $n=2$ is plotted. This is a substantial improvement to the case of example \ref{ex:First}. Indeed, we see that about $30$ per cent of the population got away without ever being infected.
	\begin{figure}[h]
		\centering
		\includegraphics[width=0.4\textwidth,height=0.25\textheight]{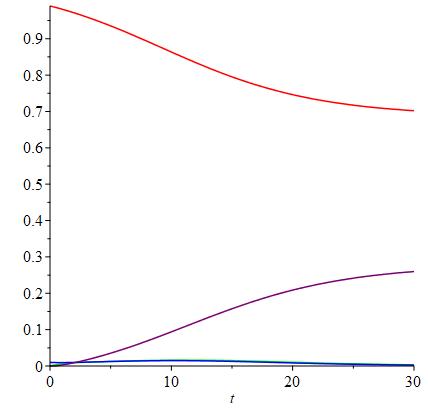}
		\caption{Example with $n=3$ and the color code of figures \ref{fig:First} and \ref{fig:First2}.}
		\label{fig:Third}
	\end{figure}
	Even better is the case when $n=3$ which is plotted in figure \ref{fig:Third} just as a further example. This represents an extremely good scenarium where the number of infected is kept very low and almost disappears by time $t=30$.
\end{example}

We shall now give an example using the more refined model from the system \ref{eq:A1}--\ref{eq:A5}.

\begin{example}
	Let $\beta_i=0.9$, $\beta_e=1.5$, $\beta_a=2.8$, while $a_a=0.2$, $a_i=0.8$ and $\gamma_i=0.8$, $\gamma_i=1.2$, $\mu=0.01$. Set the initial conditions to be $S(0)=0.99$, $E(0)=0.01$, $I(0)=0$, $R(0)=0$. Using, these values we run in figure \ref{fig:N2} two simulations, one for the case when $n=1$ which means the population was not split at all and one for $n=2$, i.e. the population was split into two groups.
	\begin{figure}[h]
		\centering
		\includegraphics[width=0.4\textwidth,height=0.25\textheight]{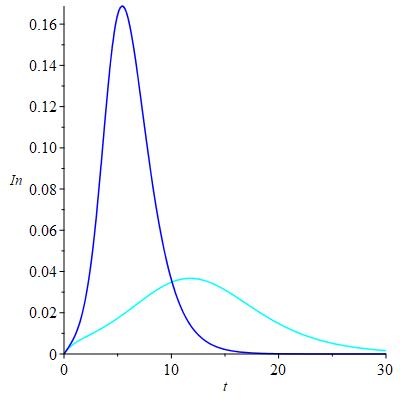}
		\caption{Comparing $I(t)$, the fraction of infected in the whole population, for $n=1$ in blue and $n=2$ in cyan.}
		\label{fig:N2}
	\end{figure}
	This suggests that, indeed, splitting the population into two groups leads to a much smaller peak. Nevertheless, it does take more time to eliminate the infection.
\end{example}

\end{document}